\documentclass[11pt]{article}
\usepackage{subcaption}

\usepackage[preprint]{jmlr2e-arxiv}
\usepackage[utf8]{inputenc}
\usepackage{dsfont}
\setcitestyle{numbers,square,comma,sort&compress}

\PassOptionsToPackage{numbers,compress}{natbib}
\usepackage{natbib}
\usepackage{graphicx,graphics}
\usepackage{wrapfig}
\usepackage{amsmath}
\usepackage{enumitem}
\usepackage{booktabs} 
\usepackage{nicefrac} 
\usepackage{yfonts}
\usepackage[euler]{textgreek}
\usepackage{listings}
\usepackage{MnSymbol}
\usepackage{silence}
\usepackage[mathscr]{eucal}
\usepackage[linesnumbered, ruled, vlined]{algorithm2e}%
\usepackage{algpseudocode}

\def\P{\mathbb{P}}

\def\R{\mathbb{R}}

\def\P{\mathbb{P}}

\newcommand{\argmax}{\mathop{\mathrm{argmax}}}

\newcommand{\red}[1]{\textcolor{red}{#1}}
\newcommand{\teal}[1]{\textcolor{teal}{#1}}
\newcommand{\flink}[1]{\textcolor{blue}{\underline{#1}}}
\usepackage{listings}
\usepackage{xcolor}

\newcounter{counter}[section]

\usepackage{tcolorbox}
\usepackage[T1]{fontenc}
\usepackage{hyperref}
\usepackage{url}
\usepackage{amsfonts}
\usepackage{microtype}
\usepackage{xspace}

\begin{document}

\title{Mechanism Design for Quality-Preserving  LLM Advertising}

\author{\name Jiale Han \email jialehan@ucla.edu\\
\addr Department of Statistics and Data Science\\
University of California, Los Angeles, CA 90095-1554, USA\\
\AND
\name Xiaowu Dai\thanks{\textit{Address for correspondence:} Xiaowu Dai, Department of Statistics, University of California, Los Angeles, 8125 Math Sciences Bldg \#951554, Los Angeles, CA 90095, USA. Email: daix@ucla.edu.}
\email daix@ucla.edu\\
\addr Department of Statistics and Data Science and Department of Biostatistics\\
University of California, Los Angeles, CA 90095-1554, USA
}

\maketitle

\begin{abstract}
\noindent
Embedding advertisements into large language model (LLM) outputs introduces a fundamental tension: revenue optimization can distort content and degrade user experience. Existing approaches largely ignore this trade-off, often forcing irrelevant ads into responses. We propose a \emph{quality-preserving} auction framework that explicitly integrates content fidelity into the mechanism design. Built on retrieval-augmented generation (RAG), our approach treats \emph{organic content} as a reference and derives an endogenous reserve price that screens out ads with non-positive marginal social welfare contributions. We develop a KL-regularized single-allocation mechanism with Myerson payments and a screened VCG multi-allocation mechanism, both satisfying dominant-strategy incentive compatibility and individual rationality. Experiments across diverse scenarios demonstrate that our mechanisms outperform existing baselines in metrics such as revenue per ad and semantic similarity to no-ad responses. Our results establish a new paradigm for LLM advertising that enables monetization without compromising output quality.

\end{abstract}
\bigskip

\section{Introduction}\label{sec:intro}
Large language models (LLMs) are rapidly becoming the primary interface for information access, enabling applications such as question answering, content generation, and decision support~\citep{achiam2023gpt, brown2020language,  touvron2023llama, zhao2023survey}. Their training and deployment, however, incur substantial computational cost, creating an urgent need for scalable monetization~\citep{minaee2024large}. Online advertising is a natural solution~\citep{choi2020online}: platforms can allow advertisers to bid for inclusion in LLM-generated responses. Unlike traditional display advertising, this paradigm embeds ads directly into generated content, introducing a fundamental \emph{tension} between revenue maximization and output quality. Recent work proposes auction mechanisms for LLM-based advertising~\citep{dubey2024auctions, duetting2024mechanism, hajiaghayi2024ad, soumalias2024truthful}, but largely overlooks this tension. Existing approaches optimize welfare based only on advertiser value and restrict generation to sponsored content, which can introduce irrelevant ads and degrade coherence~\citep{duetting2024mechanism, hajiaghayi2024ad}. While \citet{soumalias2024truthful} address this via distributional penalties, their method incurs inference costs that scale with the number of advertisers.
A~key challenge remains: \emph{how to design auction mechanisms that enable monetization without compromising the semantic fidelity of LLM outputs?}

In this paper, we propose a \emph{quality-preserving} mechanism design framework for LLM advertising. Our approach builds on retrieval-augmented generation (RAG)~\citep{lewis2020retrieval} and treats organic content, namely the no-ad source such as Wikipedia pages or outputs from internal tools, as a reference that anchors output quality~\citep{guu2020retrieval, izacard2021leveraging}. We incorporate this organic content directly into the social welfare objective, measuring quality as the semantic similarity to no-ad responses. This perspective fundamentally changes the role of the auction:  instead of allocating over the full set of candidate ads, the mechanism admits only those that improve the overall quality-adjusted social welfare.

A key implication of this formulation is an endogenous \emph{reserve price} that screens advertisers based on their marginal contribution relative to the organic baseline. 
This screening step excludes ads that do not contribute positively to the overall quality-adjusted welfare, whether due to semantic irrelevance or low bids, thereby preserving content quality while ensuring a revenue lower bound for the platform.
Compared to prior RAG-based mechanisms~\citep{hajiaghayi2024ad}, which incorporate ads without filtering, our approach explicitly enforces quality preservation at the level of the allocation rule.

Building on the screened RAG model, we design both single-allocation and multi-allocation auction mechanisms. In the single-allocation setting, we maximize a Kullback-Leibler (KL)-regularized social welfare objective that penalizes deviation from  the screened RAG model, which we treat as the semantically optimal retrieval distribution, yielding a tractable allocation rule with payments derived via Myerson’s envelope formula~\citep{myerson1981optimal}, and thus achieving dominant-strategy incentive compatibility (DSIC) and individual rationality (IR). In the multi-allocation setting, we adapt the Vickrey–Clarke–Groves (VCG) mechanism~\citep{clarke1971multipart, groves1973incentives, vickrey1961counterspeculation} to select a welfare-optimal subset of ads. Crucially, the screening step reduces the candidate set to a small eligible subset, making the mechanism computationally efficient while preserving incentive guarantees.

We evaluate our framework across diverse queries, advertiser sets, and bid profiles, and compare against the most closely related RAG-based baseline~\citep{hajiaghayi2024ad}. Across all scenarios, our mechanisms consistently achieve higher revenue,  relevance, and social welfare, while maintaining substantially better output quality, measured as semantic similarity to no-ad responses. 
These results demonstrate that the proposed mechanism design establishes a new paradigm for LLM advertising that enables monetization without compromising output quality.
In conclusion, our contributions are threefold.

\begin{itemize}[leftmargin=*, itemsep=4pt, parsep=0pt, topsep=0pt]
\item We introduce a quality-preserving auction framework for LLM advertising that incorporates organic content into the social welfare objective and derives an endogenous reserve price to screen out ads with negative contributions to the quality-adjusted social welfare.
\item We design incentive-compatible single- and multi-allocation mechanisms: a KL-regularized Myerson mechanism for single allocation, and a screened VCG mechanism for multi-allocation with reduced computational complexity.
\item We provide extensive empirical validation demonstrating consistent improvements over existing RAG-based mechanisms in revenue, semantic relevance, social welfare, and output quality.
\end{itemize}

\paragraph{Related Work.}
Our work relates to three threads of literature: (i) auction design for LLM-generated contents, (ii) RAG, and (iii) reserve prices in auction theory.

First, mechanism design for LLM-generated content has recently emerged at the intersection of machine learning and economics~\citep{feizi2023online}. Existing approaches differ mainly in the \emph{granularity of allocation} and how ads are integrated into generation. At the \emph{token level}, \citet{duetting2024mechanism} propose per-token bidding, but such fine-grained control does not ensure semantic coherence. At the \emph{response level}, \citet{soumalias2024truthful} aggregate multiple candidates at substantial inference cost, while \citet{dubey2024auctions} focus on ad summarization. \citet{balseiro2026position} extend position auctions to AI-generated content under known click-through rates. Most closely related are \emph{segment-level} RAG-based mechanisms by \citet{hajiaghayi2024ad}, which integrate ads without additional inference cost but restrict generation to sponsored content, potentially introducing irrelevant ads, and lack incentive guarantees in the multi-allocation setting. In contrast, we incorporate \emph{organic content} into the allocation problem, yielding a quality-preserving objective that screens out ads with negative marginal welfare contributions and supports incentive-compatible mechanisms.

Second, RAG~\citep{lewis2020retrieval} enhances LLMs by conditioning outputs on retrieved documents and is widely used to improve factuality and relevance in question answering~\citep{guu2020retrieval, izacard2021leveraging, zhao2026retrieval}. Prior work uses RAG primarily as a modeling tool for incorporating external information. We extend this paradigm by treating organic (non-sponsored) content as a first-class component in the retrieval process and embedding it directly into the mechanism design objective.

Finally, reserve prices are a classical tool for revenue optimization in auctions~\citep{myerson1981optimal, ostrovsky2023reserve}, with substantial follow-up work on learning reserve prices from samples or historical data~\citep{cesa2014regret, cole2014sample, chawla2014mechanism, han2025online}. While these works derive reserves from value distributions or bidding data, our reserve prices arise endogenously from a quality-adjusted welfare formulation: each advertiser is screened based on its marginal contribution relative to the organic baseline, simultaneously preserving output quality and guaranteeing a revenue lower bound.

\section{Preliminaries}\label{sec:preli}

We consider $n$ advertisers competing to place ads in the LLM output. Let $[n]=\{1,\dots,n\}$ and $\mathcal{A}=\{a_1,\dots,a_n\}$ denote the advertisers and their ad contents. Given a query $x$, the LLM generates an output $y = y^{(1:T)}$ in $T$ segments (e.g., tokens, sentences, or paragraphs). At each segment $t$, advertisers submit bids $\vec{b}^{(t)} = (b_1^{(t)},\dots,b_n^{(t)})$, which influence generation. We are given a reference LLM model $\P_{\mathrm{ref}}$ and seek to construct a bid-influenced LLM model $\P_{\mathrm{gen}}$.

\paragraph{RAG with Organic Content.}
We adopt a RAG framework~\citep{lewis2020retrieval}, where content generation conditions on retrieved documents without fine-tuning. For each segment $t$, unlike prior work that restricts retrieval to ads~\citep{hajiaghayi2024ad}, we augment the retrieval set $\mathcal{A}$ with \emph{organic content} $a_0^{(t)}$, representing the no-ad source such as Wikipedia pages or outputs from internal tools~\citep{guu2020retrieval, izacard2021leveraging}. For notational simplicity, we drop the superscript of $a_0^{(t)}$ and write the augmented retrieval set as $\mathcal{A}\cup \{a_0\} = \{a_0,a_1,\ldots,a_n\}$. At segment $t$, conditioned on the query $x$ and previously generated segments $y^{(1:t-1)}$, the model samples
\begin{equation}
\label{RAG_NEW} 
\P_{\text{RAG-Sequence}}(y^{(t)} \mid x, y^{(1:t-1)}) := \sum_{i=0}^{n} q_i^{(t)} \P_{\mathrm{ref}}(y^{(t)} \mid x, y^{(1:t-1)}, a_i),
\end{equation}
where the retrieval weight $q_i^{(t)} = \P_{\eta}(a_i \mid x, y^{(1:t-1)})$ denotes the relevance of content $a_i$ to the current context. This formulation allows the model to generate responses without forcing ad inclusion.

\paragraph{Auction Mechanism}
We consider an auction conducted independently at each generation segment. At segment $t$, the mechanism maps bids $\vec{b}^{(t)} \in \mathbb{R}_{\ge 0}^n$ to an allocation $x_i(\vec{b}^{(t)}) \in [0,1]$ for each $i \in [n]\cup\{0\}$, and a payment $p_i(\vec{b}^{(t)}) \ge 0$ for each advertiser $i \in [n]$. Here, $x_i(\vec{b}^{(t)})$ denotes the probability that content item $a_i$ is selected in the next segment, 
and $p_i(\vec{b}^{(t)})$ is the payment charged to advertiser $i$.

We assume the retrieval weight is calibrated to click-through rates (CTR; e.g., \cite{hajiaghayi2024ad}), which can be achieved via standard calibration methods~\citep{mcmahan2013ad, naeini2015obtaining}.
\begin{assumption}\label{Assmp:1}
Let $E_i^{(t)}$ denote the click event for ad $a_i$ at segment $t$. The click-through rate satisfies $\mathrm{ctr}_i^{(t)} := \mathbb{E}[E_i^{(t)} \mid x, y^{(1:t-1)}] = C^{(t)} q_i^{(t)}$ for some normalization constant $C^{(t)}$.
\end{assumption}

Each advertiser $i$ has a private value $v_i \in \mathbb{R}_{\ge 0}$ per click. The expected utility at segment $t$ is $u_i^{(t)}(\vec{b}^{(t)}) = \mathrm{ctr}_i^{(t)} \cdot v_i \, x_i(\vec{b}^{(t)}) - p_i(\vec{b}^{(t)}).$
The auctioneer seeks to design mechanisms that maximize expected revenue subject to dominant-strategy incentive compatibility (DSIC) and individual rationality (IR).  DSIC requires truthful bidding to be optimal for each advertiser, i.e., $u_i^{(t)}(v_i, \vec{b}^{(t)}_{-i})\geq u_i^{(t)}(b_i^{(t)}, \vec{b}^{(t)}_{-i})$ for every possible bid $b_i^{(t)}\in \R_{\geq 0}$ and competing bids $\vec{b}^{(t)}_{-i}\in \R_{\geq 0}^{n-1}$.  IR requires nonnegative utility from participation, i.e., $u_i^{(t)}(v_i, \vec{b}^{(t)}_{-i})\geq0$ for all $\vec{b}^{(t)}_{-i}\in \R_{\geq 0}^{n-1}$.

\paragraph{Design Goal.}
Our objective is to design \emph{quality-preserving} auction mechanisms that balance ad monetization and output fidelity. We incorporate organic content into the welfare objective and derive an endogenous reserve price that admits only ads with positive marginal social welfare contributions. We further penalize deviation from the semantically optimal retrieval distribution to preserve semantic coherence. Our framework yields DSIC and IR mechanisms for both single- and multi-allocation settings while maintaining computational tractability.

In the remainder of the paper, we analyze the auction mechanism at an arbitrary time step $t \in [T]$. For notational brevity, we suppress the superscript $(t)$ for all variables (i.e., $q_i, w_i, b_i, \vec{b}, u_i,\mathrm{ctr}_i, C$) unless otherwise specified.

\section{Main Result}\label{main_result}

\subsection{Screened RAG Model with Welfare-Improving Ad Selection}\label{sec:ad selection}

Introducing advertisements generates revenue but may distort the generated content, degrading user-perceived quality. This trade-off implies that ads with negative marginal contributions to welfare should be excluded.

Let $f(q_0)$ denote the expected social welfare generated by organic content, where $q_0$ is the retrieval weight of the original content, representing its semantic relevance.
We assume $f(q_0)\geq 0$ and that $f(\cdot)$ is increasing in $q_0$.
One natural choice is a power function, $f(q_0) := \eta q_0^{\beta}$ for $\eta > 0$ and $\beta \in (0,1)$, which is uniquely characterized by a scale-invariant welfare ratio condition (Proposition~\ref{thm:uniqueness} in Appendix~\ref{sec:justify}). The concavity ($\beta < 1$) reflects diminishing marginal returns to relevance~\citep{marshall1961principles}, and mirrors the concave value mappings employed in related auction models~\citep{dubey2024auctions}. Under the RAG model~\eqref{RAG_NEW}, the resulting \emph{quality-adjusted social welfare} is
\begin{equation}\label{SW_rag}
\begin{split}
\mathrm{SW}_{\mathrm{rag}}
&:= 
\sum_{i=1}^n q_i\cdot \mathrm{ctr}_i v_i+ q_0\cdot f(q_0),
\end{split}
\end{equation}
where the first term captures advertisers' value and the second term captures the contribution of organic content. Payments are transfers and thus canceled in the social welfare calculation.

In standard RAG, only the top-$k$ documents are retrieved, prioritizing those that contribute most to the generated response~\citep{lewis2020retrieval}. By analogy, it is natural to restrict the ad set $\mathcal{A}$ to advertisers with positive marginal contributions to social welfare. Eq.~\eqref{SW_rag} suggests interpreting $\mathrm{ctr}_i v_i$ as a \emph{pseudo-virtual value}, analogous to the virtual value in classical auction theory~\citep{krishna2009auction}. This  leads to a natural reserve price rule that screens out advertisers with negative marginal contributions, as formalized below.

\begin{theorem}\label{the:reserve}
Under Assumption~\ref{Assmp:1}, given $\hat f(\cdot):= f(\cdot)/C$, $\{q_i\}_{i=0}^n$, and $\{v_i\}_{i=1}^n$, 
including advertiser~$i$ in the candidate set increases expected social 
welfare $\mathrm{SW}_{\mathrm{rag}}$ if and only if $   v_i \geq \hat f(q_0)/q_i.$
Consequently, the welfare-maximizing eligible set is
$
S = \{i \in [n] : v_i \geq \hat f(q_0)/q_i \}.
$
Equivalently, in a DSIC and IR mechanism, this corresponds to imposing a reserve price $r_i = \hat{f}(q_0)/q_i$ on advertiser $i$.
\end{theorem}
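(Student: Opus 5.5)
The plan is to make precise what ``including advertiser $i$'' means in the retrieval model~\eqref{RAG_NEW}, and then reduce the claim to a per-advertiser comparison of two welfare contributions. I will use the natural counterfactual: if ad $a_i$ is excluded from the candidate set, the retrieval mass $q_i$ it would have received is given to the organic content $a_0$. So the excluded slot contributes to the organic term of $\mathrm{SW}_{\mathrm{rag}}$ in~\eqref{SW_rag} rather than to the advertiser term. Since the statement treats $\{q_i\}_{i=0}^n$ as given, I will value organic content at its per-unit welfare $f(q_0)$, read off from the term $q_0\cdot f(q_0)$. Under this reading, excluding $i$ converts a contribution of $q_i\cdot \mathrm{ctr}_i v_i$ into one of $q_i\cdot f(q_0)$, and no other term changes.

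\textbf{Step 1: advertiser $i$'s contribution.} By Assumption~\ref{Assmp:1}, $\mathrm{ctr}_i = C q_i$, so ad $i$ contributes $q_i\cdot C q_i v_i = C q_i^2 v_i$. If the mass goes to organic content instead, it contributes $q_i f(q_0)$.

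\textbf{Step 2: the inclusion condition.} The difference in $\mathrm{SW}_{\mathrm{rag}}$ between including and excluding $i$ is $q_i\bigl(C q_i v_i - f(q_0)\bigr)$. For $q_i>0$ and $C>0$, this is nonnegative if and only if $v_i \ge f(q_0)/(C q_i) = \hat f(q_0)/q_i$, which is the claimed ``if and only if''.

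\textbf{Step 3: the welfare-maximizing set.} Each advertiser's contribution depends only on its own $(q_i,v_i)$ and the fixed baseline $f(q_0)$, so the objective is additively separable across $i$. Maximizing it over subsets of $[n]$ therefore amounts to including exactly those $i$ whose individual term is nonnegative. This gives $S=\{i\in[n]: v_i\ge \hat f(q_0)/q_i\}$, with ties broken toward inclusion, matching the weak inequality in the statement.

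\textbf{Step 4: the reserve-price reformulation.} In a DSIC and IR mechanism bids equal values, so the screening rule becomes ``$i$ is eligible if and only if $b_i\ge r_i:=\hat f(q_0)/q_i$.'' This eligibility rule is monotone in $b_i$ with threshold $r_i$. By Myerson's characterization, any monotone allocation that gives $i$ zero allocation below $r_i$ yields payments that never fall below the level implied by bidding $r_i$. This is exactly the role of a personalized reserve price, which also gives the claimed revenue lower bound.

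The main obstacle is justifying the counterfactual in Step 1: why the reallocated mass is valued at $f(q_0)$ rather than at the re-evaluated function $f(q_0+q_i)$, with a corresponding change to the whole term $q_0 f(q_0)$. I will handle this by stating the comparison as a marginal, per-unit-of-retrieval-mass one, with $q_0$ held fixed as in the theorem's hypotheses; this is what ``given $\{q_i\}_{i=0}^n$'' signals. I will also remark that, because $f$ is increasing, this is the conservative benchmark for the value of organic content. The remaining steps are routine algebra and separability.
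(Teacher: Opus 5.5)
Your proposal is correct and is essentially the paper's argument. The paper uses $\sum_{i=0}^n q_i = 1$ to rewrite $\mathrm{SW}_{\mathrm{rag}} = f(q_0) + \sum_{i=1}^n q_i\bigl(\mathrm{ctr}_i v_i - f(q_0)\bigr)$, which is exactly your counterfactual (excluded mass reverts to organic content at the frozen per-unit value $f(q_0)$), then reads off the same separable sign condition and substitutes $\mathrm{ctr}_i = C q_i$; your explicit justification for evaluating $f$ at $q_0$ rather than $q_0+q_i$ states a convention the paper adopts without comment.
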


The proof is provided in Appendix~\ref{A0.0}. The reserve price screens out advertisers with low relevance or low bids.
This improves content quality while guaranteeing a revenue lower bound for the platform.
Our reserve price differs from classical designs~\citep[e.g.,][]{myerson1981optimal, cesa2014regret, ostrovsky2023reserve} in two key aspects. First, it is driven by the organic welfare function and ad relevance, rather than the distribution of bidder values. Second, its primary role is to preserve output quality rather than to maximize revenue, reflecting the distinct objectives of LLM-based systems.

After excluding advertisers with negative marginal welfare, we define the eligible set $S = \{ i \in [n] : b_i \geq r_i \}$ and augment it with organic content as $S^* = S \cup \{0\}$. We then renormalize the retrieval weights over $S^*$ by
$\tilde{q}_i := {q_i}/{\sum_{j \in S^*} q_j}$,
for $i \in S^*$, so that $\sum_{i \in S^*} \tilde{q}_i = 1$. Here, $\tilde{q}_i$ represents the updated retrieval weight of source $a_i$.
The resulting \emph{screened} RAG model is
\begin{equation}\label{base_rag}
\P_{\mathrm{rag}}(y^{(t)} \mid x, y^{(1:t-1)}) 
:= \sum_{i\in S^*} 
\tilde{q}_i\,
\P_{\mathrm{ref}}(y^{(t)} \mid x, y^{(1:t-1)}, a_i).
\end{equation}

\subsection{Single-Allocation Auction Design}\label{sec:single}
In this section, we  design a single-allocation mechanism in which exactly one content source is selected at each segment, so that $\sum_{i=0}^n x_i(\vec{b}) = 1$. Under the screened RAG model~\eqref{base_rag}, advertisers outside the eligible set $S^*$ are excluded, and we set $x_i(\vec{b}) = 0$ for all $i \notin S^*$.
Given an allocation rule $\{x_i(\vec{b})\}_{i \in S^*}$, the bid-influenced generation model is defined via linear aggregation:
\begin{equation}\label{RAG}
\P_{\mathrm{gen}}(y^{(t)} \mid x, y^{(1:t-1)})=
\sum_{i\in  S^*} x_i(\vec{b}) \P_{\mathrm{ref}}(y^{(t)} \mid x, y^{(1:t-1)}, a_i).
\end{equation}
This linear aggregation follows prior work~\citep{duetting2024mechanism, hajiaghayi2024ad}, but unlike these formulations, ours explicitly incorporates organic content in addition to filtered advertisements.

\paragraph{Allocation Rule.} 
We design an allocation rule that maximizes expected social welfare while controlling deviation from the screened RAG distribution $\P_{\mathrm{rag}}$ in~\eqref{base_rag}, which serves as the semantically optimal baseline in the absence of bid influence. We measure this deviation using the KL divergence
$D_{\mathrm{KL}}(\P_{\mathrm{gen}}(\cdot \mid x, y^{(1:t-1)}) \,\|\, \P_{\mathrm{rag}}(\cdot \mid x, y^{(1:t-1)})),$
and write it as $D_{\mathrm{KL}}(\P_{\mathrm{gen}} \| \P_{\mathrm{rag}})$.

Given that the mechanism observes bids $\vec{b}$ rather than true values $\vec{v}$, we optimize welfare with respect to $\vec{b}$. The allocation $\vec{x} \in \Delta^{|S^*|}:=\{\vec{x}: x_i\ge 0,\sum_{i\in S^*} x_i=1\}$ is chosen to solve
\begin{equation}\label{post_sw}
  \max_{\vec{x} \in \Delta^{|S^*|}}  \mathrm{SW}_{\mathrm{auction}}(\vec{b}, \vec{x})
    :=\sum_{i\in S}  x_i\cdot \mathrm{ctr}_i b_i 
    + x_0\cdot f(q_0) 
    - \lambda D_{\mathrm{KL}}(\mathbb{P}_{\mathrm{gen}} \|\mathbb{P}_{\mathrm{rag}}),
\end{equation}
where $\lambda > 0$ controls the trade-off between revenue and fidelity. Under DSIC, truthful bidding implies $b_i = v_i$, so the mechanism maximizes the true social welfare penalized by deviation from the baseline.

Computing $D_{\mathrm{KL}}(\P_{\mathrm{gen}} \| \P_{\mathrm{rag}})$ directly is intractable due to the exponential output space. The next result provides a tractable upper bound in terms of the allocation vectors $\vec{x}$ and retrieval weights $\tilde{\vec{q}}$.

\begin{proposition}
\label{theo:2}
Let $\vec{\tilde{q}} = (\tilde{q}_i)_{i \in S^*}$ denote the vector of renormalized retrieval weights of the screened RAG model $\mathbb{P}_{\mathrm{rag}}$ in Eq.~\eqref{base_rag}, and let $\vec{x}=(x_i)_{i\in S^*}$ denote the allocation vector of the generative model $\P_{\mathrm{gen}}$ in Eq.~\eqref{RAG}. Then the following inequality holds:
$ D_{\mathrm{KL}}(\P_{\mathrm{gen}} \,\|\, \P_{\mathrm{rag}}) \le D_{\mathrm{KL}}(\vec{x} \,\|\, \vec{\tilde{q}}). 
$ 
\end{proposition}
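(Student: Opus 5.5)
This is the joint convexity of KL divergence, or equivalently the data-processing inequality for the channel that maps a source index to a segment. The plan is to view both $\P_{\mathrm{gen}}$ and $\P_{\mathrm{rag}}$ as $y$-marginals of joint distributions on pairs $(i,y)$ that share the same conditional kernel. On $S^*\times\mathcal{Y}$, where $\mathcal{Y}$ is the space of possible segments $y^{(t)}$, define
\[
Q_{\mathrm{gen}}(i,y) := x_i\,\P_{\mathrm{ref}}(y \mid x, y^{(1:t-1)}, a_i),\qquad Q_{\mathrm{rag}}(i,y) := \tilde q_i\,\P_{\mathrm{ref}}(y \mid x, y^{(1:t-1)}, a_i).
\]
Summing over $i\in S^*$ recovers Eq.~\eqref{RAG} and Eq.~\eqref{base_rag} exactly. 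Summing over $y$ recovers $\vec{x}$ and $\vec{\tilde q}$, since each $\P_{\mathrm{ref}}(\cdot\mid\ldots,a_i)$ is a probability distribution.

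\textbf{Step 1 (chain rule).} Applying the chain rule for KL divergence to these joints gives
\[
D_{\mathrm{KL}}(Q_{\mathrm{gen}}\|Q_{\mathrm{rag}}) = D_{\mathrm{KL}}(\vec{x}\|\vec{\tilde q}) + \sum_{i} x_i\, D_{\mathrm{KL}}\bigl(\P_{\mathrm{ref}}(\cdot\mid a_i)\,\|\,\P_{\mathrm{ref}}(\cdot\mid a_i)\bigr).
\]
The conditional terms vanish because the two kernels are identical. Hence the joint divergence equals $D_{\mathrm{KL}}(\vec{x}\|\vec{\tilde q})$.

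\textbf{Step 2 (marginalization).} Applying the chain rule in the other order gives
\[
D_{\mathrm{KL}}(Q_{\mathrm{gen}}\|Q_{\mathrm{rag}}) = D_{\mathrm{KL}}(\P_{\mathrm{gen}}\|\P_{\mathrm{rag}}) + \mathbb{E}_{y\sim\P_{\mathrm{gen}}}\bigl[D_{\mathrm{KL}}\bigl(Q_{\mathrm{gen}}(\cdot\mid y)\,\|\,Q_{\mathrm{rag}}(\cdot\mid y)\bigr)\bigr].
\]
The second term is nonnegative, so the joint divergence is at least $D_{\mathrm{KL}}(\P_{\mathrm{gen}}\|\P_{\mathrm{rag}})$. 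Combining with Step 1 yields the claim. As a fallback, the same conclusion follows pointwise from the log-sum inequality applied at each $y$:
\[
\Bigl(\sum_i x_i p_i(y)\Bigr)\log\frac{\sum_i x_i p_i(y)}{\sum_i \tilde q_i p_i(y)} \le \sum_i x_i p_i(y)\log\frac{x_i}{\tilde q_i},
\]
followed by summing over $y$.

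\textbf{Main obstacle (edge cases).} The two cases to handle are zero weights and the support of $\vec{x}$.
\begin{itemize}
\item Indices with $x_i=0$ contribute nothing, using the convention $0\log 0=0$.
\item If some $x_i>0$ while $\tilde q_i=0$, the right-hand side is $+\infty$ and the inequality is trivial. This case cannot actually arise, because $\vec{x}$ is supported on $S^*$ and every $\tilde q_i$ with $i\in S^*$ is positive whenever $q_i>0$.
\item If $\mathcal{Y}$ is countable (token sequences), sums over $y$ are fine. For general outcome spaces, the same argument goes through with densities with respect to a common dominating measure.
\end{itemize}
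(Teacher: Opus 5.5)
Your proposal is correct and follows essentially the same route as the paper: you lift both mixtures to joint distributions on $S^*\times\mathcal{Y}$ with a shared kernel, observe that the joint KL collapses to $D_{\mathrm{KL}}(\vec{x}\,\|\,\vec{\tilde{q}})$, and then bound the marginal KL by the joint KL. The only difference is cosmetic: you justify the marginalization step with the chain rule (or the log-sum inequality) where the paper cites the data processing inequality directly, and your argument is just the standard proof of that inequality.
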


The proof is provided in Appendix~\ref{A2.0}. 
Proposition~\ref{theo:2} implies that keeping
$ D_{\mathrm{KL}}(\P_{\mathrm{gen}} \,\|\, \P_{\mathrm{rag}})$ small requires the allocation vector $\vec{x}$ to remain close to the semantic-based retrieval weights $\tilde{\vec{q}}$. In particular, when $\vec{x} = \tilde{\vec{q}}$, we recover $\mathbb{P}_{\mathrm{gen}} = \mathbb{P}_{\mathrm{rag}}$. Under Assumption~\ref{Assmp:1}, $\mathrm{ctr}_i \propto q_i$, so the welfare objective $\sum_{i \in S} x_i \cdot \mathrm{ctr}_i b_i + x_0 \cdot f(q_0)$ is proportional to $\sum_{i \in S} x_i \cdot q_i b_i + x_0 \cdot \hat{f}(q_0)$. Combining this observation with Proposition~\ref{theo:2}, we approximate Problem~\eqref{post_sw} by solving the following optimization problem:
\begin{equation}\label{optmize}
\max_{\vec{x} \in \Delta^{|S^*|}} 
\widetilde{\mathrm{SW}}(\vec{b}, \vec{x}):=
\sum_{i\in S}  x_i\cdot \tilde  q_i b_i + x_0\cdot \tilde f(q_0) 
-  \tilde{\lambda}\cdot D_{\mathrm{KL}}(\vec{x} \,\|\, \vec{\tilde{q}}),
\end{equation}
where $\tilde{q}_i= {q_i}/{\sum_{j \in S^*} q_j}$, $\tilde{f}(\cdot) := \hat{f}(\cdot)/\sum_{j\in S^*} q_j$ is the normalized organic welfare function, and $\tilde{\lambda}$ is a regularization parameter. The allocation $\vec{x}(\vec{b})$ is defined as an optimal solution to Problem~\eqref{optmize}, as characterized in the following theorem.

\begin{theorem}
\label{allocation_rule}
The unique optimal solution to Problem~\eqref{optmize} is
\begin{equation}\label{allocate}
    x_i(\vec{b}) =\argmax_{\vec{x} \in \Delta^{|S^*|}} 
 \  \widetilde{\mathrm{SW}}(\vec b, \vec{x})=
    \frac{
        \tilde{q}_i \exp\left({s_i}/{\tilde{\lambda}}\right)
    }{
        \sum_{j\in S^*}
        \tilde{q}_j \exp({s_j}/{\tilde{\lambda}})
    },
    \qquad i \in S^*,
\end{equation}
where
$s_i = \tilde{q}_i b_i\cdot  \mathbf{1}\{i \neq 0\} + \tilde f(q_0)\cdot \mathbf{1}\{i=0\}.$
\end{theorem}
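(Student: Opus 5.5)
The plan is to treat Problem~\eqref{optmize} as a linear objective with entropy regularization over the simplex, which is the classical Gibbs variational principle. First I rewrite the objective uniformly. With $s_i$ as defined in the statement, the linear part $\sum_{i\in S} x_i \tilde q_i b_i + x_0 \tilde f(q_0)$ is exactly $\sum_{i\in S^*} x_i s_i$. Hence
\[
\widetilde{\mathrm{SW}}(\vec b,\vec x)=\sum_{i\in S^*} x_i s_i-\tilde\lambda\sum_{i\in S^*} x_i\log\frac{x_i}{\tilde q_i},
\]
with the convention $0\log 0=0$. Note that $\tilde q_i>0$ for every $i\in S^*$. For $i=0$ this needs $q_0>0$, which I assume. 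For $i\in S$ it holds because $b_i\ge r_i=\hat f(q_0)/q_i$ forces $q_i>0$ for the reserve to be finite.

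Next, define the candidate $x^\star_i=\tilde q_i e^{s_i/\tilde\lambda}/Z$ with $Z=\sum_{j\in S^*}\tilde q_j e^{s_j/\tilde\lambda}$. Since $\tilde q_j e^{s_j/\tilde\lambda}=Z x^\star_j$, we have $s_i=\tilde\lambda\log(Z x^\star_i/\tilde q_i)$. Substituting this into the linear term gives, for any $\vec x\in\Delta^{|S^*|}$,
\[
\widetilde{\mathrm{SW}}(\vec b,\vec x)=\tilde\lambda\log Z-\tilde\lambda\, D_{\mathrm{KL}}(\vec x\,\|\,\vec x^\star).
\]
The constant $\tilde\lambda\log Z$ appears because $\sum_i x_i=1$. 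The terms with $x_i=0$ contribute nothing to either side, and that bookkeeping is the only point needing care.

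By Gibbs' inequality, $D_{\mathrm{KL}}(\vec x\|\vec x^\star)\ge 0$, with equality iff $\vec x=\vec x^\star$, since $\vec x^\star$ has full support on $S^*$. So $\vec x^\star$ is the maximizer, the maximum value is $\tilde\lambda\log Z$, and uniqueness follows from the equality case.

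As an alternative route, I could note that the objective is strictly concave on the simplex, because $x\log x$ is strictly convex and $\tilde\lambda>0$. I would then solve the KKT conditions with multiplier $\mu$: $s_i-\tilde\lambda(\log(x_i/\tilde q_i)+1)-\mu=0$, so $x_i\propto\tilde q_i e^{s_i/\tilde\lambda}$. The infinite slope of $x\log x$ at $0$ rules out boundary solutions. The main obstacle in that route is justifying that the optimum is interior; the Gibbs identity above sidesteps it, which is why I would use it as the primary argument.
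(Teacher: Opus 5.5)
Your proof is correct, but your primary argument is not the one the paper uses; the paper's proof is essentially your ``alternative route.'' The paper first argues existence and uniqueness abstractly (compact convex feasible set, strictly concave objective). It then asserts that the maximizer is interior, and solves the Lagrange stationarity condition for the single constraint $\sum_{i\in S^*} x_i=1$, exponentiating and normalizing to reach the softmax form.

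Your primary argument instead guesses $\vec x^\star$ and rewrites the objective exactly as $\tilde\lambda\log Z-\tilde\lambda D_{\mathrm{KL}}(\vec x\,\|\,\vec x^\star)$. Gibbs' inequality then gives existence, optimality and uniqueness in one step. You need no compactness argument, no multipliers for the nonnegativity constraints, and no interiority claim.

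That last point matters. The paper justifies interiority by saying the KL term ``diverges on the boundary.'' This is not literally true: $D_{\mathrm{KL}}(\vec x\,\|\,\vec{\tilde q})$ is finite on the whole simplex when $\vec{\tilde q}$ has full support. What diverges is its slope, as you correctly say, so your identity avoids exactly the loosest step in the paper's proof.

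Your route also gives the optimal value $\tilde\lambda\log Z$, whose derivative in $b_i$ is $\tilde q_i\,x_i(\vec b)$. This makes the Myerson integral in the payment theorem, $\int_{r_i}^{b_i}x_i(z,\vec b_{-i})\,dz=(\tilde\lambda/\tilde q_i)\log\bigl(Z(b_i,\vec b_{-i})/Z(r_i,\vec b_{-i})\bigr)$, immediate. In return, the paper's Lagrangian derivation produces the softmax form rather than checking a supplied candidate. It also carries over to regularizers for which no closed-form completion like yours exists.
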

The proof is given in Appendix~\ref{A3.0}. Theorem~\ref{allocation_rule} shows that an advertiser can increase its selection probability by either bidding higher or providing more relevant content. The regularization parameter $\tilde{\lambda}$ controls how closely the allocation follows the semantic-based retrieval weights: as $\tilde{\lambda} \to \infty$, the allocation converges to the screened RAG retrieval weights, $x_i(\vec{b}) \to \tilde{q}_i$, and thus  $\mathbb{P}_{\mathrm{gen}}$ recovers the screened RAG model  $\mathbb{P}_{\mathrm{rag}}$ without bid influence.

\begin{algorithm}[t!]
\caption{\normalsize Quality-Preserving Single-Allocation Auction at Segment $t$}
\begin{algorithmic}[1]

\State \textbf{Input:} Query $x$; previously generated segments $y^{(1:t-1)}$; ad contents $\mathcal{A}=\{a_i\}_{i=1}^n$ and organic content $a_0$; welfare function $\hat f(\cdot)$; regularization parameter $\tilde{\lambda}>0$; bids $\vec{b}=(b_1,\dots,b_n)$.

\Statex \textbf{// Phase 1: Screened RAG Construction (Section 3.1)}

\State Compute the original retrieval weights $q_i = \P_{\eta}(a_i \mid x, y^{(1:t-1)})$ for all $i=0,\dots,n$.

\State Compute reserve prices $r_i = \hat{f}(q_0)/q_i$ for each $i \in [n]$, and form the eligible set 
$S = \{ i \in [n] : b_i \geq r_i \}$, $S^*=\{0\} \cup S.$

\State Re-normalize the retrieval weights over $S^*$ as
$
\tilde{q}_i = {q_i}/{\sum_{j \in S^*} q_j},
$
and construct the screened RAG model $\mathbb{P}_{\mathrm{rag}}$ according to Eq.~\eqref{base_rag}.

\Statex \textbf{// Phase 2: Auction Mechanism (Section 3.2)}

\State Initialize $x_i(\vec{b}) = 0$ and $\tilde p_i(\vec{b}) = 0$ for all $i \notin  S^*$.

\State Compute the optimal allocation $x_i(\vec{b})$ for all $i \in S^*$ according to Eq.~\eqref{allocate}.

\State Compute per-click payments $\tilde{p}_i(\vec{b})$ for all $i \in S$ according to Eq.~\eqref{eq:payment_rule}.

\State \textbf{Output:} Generation model $\mathbb{P}_{\mathrm{gen}}$ Eq.~\eqref{RAG}; allocations $\{x_i(\vec{b})\}_{i \in  S^*}$; payments $\{\tilde{p}_i(\vec{b})\}_{i \in S}$.

\end{algorithmic}
\label{alg:erlam}
\end{algorithm}

\paragraph{Payment Rule.} 

We define the per-click payment $\tilde{p}_i(\vec{b})$ by $p_i(\vec{b}) = \mathrm{ctr}_i\cdot\tilde{p}_i(\vec{b})$, and the corresponding per-click utility $\tilde{u}_i(\vec{b}) := v_i x_i(\vec{b}) - \tilde{p}_i(\vec{b})$. The expected utility can then be written as
$u_i(\vec{b}) = \mathrm{ctr}_i \cdot \tilde{u}_i(\vec{b})$.
For any bidder $i \notin S$ whose bid is below the reserve price, we set $\tilde{p}_i(\vec{b}) = 0$. In practice, the advertiser pays $\tilde{p}_i(\vec{b}) / x_i(\vec{b})$ only when the ad is both shown and clicked, and pays nothing otherwise.
The payment rule of the mechanism is uniquely determined by the DSIC and IR constraints via Myerson’s envelope formula \citep{myerson1981optimal}.

\begin{theorem}
\label{thm:mechanism_dsic_unique}
Under Assumption~\ref{Assmp:1}, for each bidder
$i \in S$ with reserve price $r_i$, any DSIC and IR mechanism for the RAG allocation rule Eq.~\eqref{allocate} has per-click payment rule
\begin{equation}\label{eq:payment_rule}
\tilde{p}_i(\vec{b})=b_i\bigl(x_i(\vec{b})-1\bigr)+r_i+\frac{\tilde{\lambda}}{\tilde{q}_i}\ln\frac{x_i(\vec{b})}{x_i(r_i,\vec{b}_{-i})},
\end{equation}
up to an additive constant. 
The payment rule Eq.~\eqref{eq:payment_rule} corresponds to  the revenue-maximizing 
payment among all DSIC and IR mechanisms.

\end{theorem}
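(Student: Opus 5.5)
The plan is to reduce the problem to a single-parameter environment and apply Myerson's characterization. Fix $i\in S$ and $\vec b_{-i}$. Under Assumption~\ref{Assmp:1}, the expected utility factors as $u_i=\mathrm{ctr}_i\,\tilde u_i$ with $\mathrm{ctr}_i=Cq_i>0$ independent of $b_i$. Hence DSIC and IR for $u_i$ are equivalent to DSIC and IR for the per-click utility $\tilde u_i(b_i)=v_i x_i(b_i,\vec b_{-i})-\tilde p_i(b_i,\vec b_{-i})$. This is a standard single-parameter problem with allocation $x_i(\cdot,\vec b_{-i})$.

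By Myerson's lemma, a DSIC payment exists if and only if $x_i$ is nondecreasing in $b_i$, and then it is pinned down up to a constant by the envelope formula
\[
\tilde p_i(b_i)=b_i x_i(b_i)-\int_{r_i}^{b_i} x_i(z,\vec b_{-i})\,dz-\bigl(r_i x_i(r_i)-\tilde p_i(r_i)\bigr).
\]
The lower limit is $r_i$ because bids below $r_i$ put $i$ outside $S$ and give zero allocation. The jump at $r_i$ is handled by the threshold term.

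\textbf{Step 1: monotonicity.} I need $x_i(b_i)$ to be nondecreasing on $b_i\ge r_i$. For $b_i\ge r_i$ the set $S^*$ does not change with $b_i$: membership of $j\neq i$ depends only on $b_j,q_j,q_0$. So $\tilde q$ and $\tilde f(q_0)$ are fixed. From Theorem~\ref{allocation_rule},
\[
x_i=\frac{A e^{\tilde q_i b_i/\tilde\lambda}}{A e^{\tilde q_i b_i/\tilde\lambda}+B},
\]
where $A=\tilde q_i$ and $B>0$ is independent of $b_i$. This is a logistic function, strictly increasing. Differentiating gives
\[
\partial x_i/\partial b_i=(\tilde q_i/\tilde\lambda)\,x_i(1-x_i).
\]

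\textbf{Step 2: compute the integral.} This is the main computation. Since
\[
x_i=1-\frac{B}{Ae^{\tilde q_i z/\tilde\lambda}+B},
\]
I integrate $\int x_i\,dz$ via the substitution $u=e^{\tilde q_i z/\tilde\lambda}$. Alternatively, observe that
\[
\frac{d}{dz}\Bigl[z+\tfrac{\tilde\lambda}{\tilde q_i}\ln\bigl(Ae^{\tilde q_i z/\tilde\lambda}+B\bigr)-z\Bigr]
\]
relates to $x_i$. More cleanly,
\[
\ln x_i(z)=\ln A+\tilde q_i z/\tilde\lambda-\ln\bigl(Ae^{\tilde q_i z/\tilde\lambda}+B\bigr),
\]
so
\[
\frac{\tilde\lambda}{\tilde q_i}\,\frac{d}{dz}\ln x_i=1-x_i.
\]
Hence
\[
\int_{r_i}^{b_i}x_i\,dz=(b_i-r_i)-\frac{\tilde\lambda}{\tilde q_i}\ln\frac{x_i(b_i)}{x_i(r_i)}.
\]
Substituting gives
\[
\tilde p_i=b_i x_i-b_i+r_i+\frac{\tilde\lambda}{\tilde q_i}\ln\frac{x_i(b_i)}{x_i(r_i)}+\text{const},
\]
which is exactly Eq.~\eqref{eq:payment_rule}.

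\textbf{Step 3: IR and revenue maximality.} With this payment, truthful utility is
\[
\tilde u_i(v_i)=\int_{r_i}^{v_i}x_i\,dz+\bigl(r_i x_i(r_i)-\tilde p_i(r_i)\bigr).
\]
In the formula the additive constant is chosen as $r_i x_i(r_i)-\tilde p_i(r_i)=0$, so that at $b_i=r_i$ we have $\tilde p_i=r_i x_i(r_i)$ and the bidder at the reserve gets zero utility. Any DSIC mechanism differs by a constant $h_i(\vec b_{-i})$, and IR at the lowest eligible type $v_i=r_i$ forces $h_i\le 0$ relative to ours. Therefore our choice attains the largest payment, i.e., it is revenue-maximizing, and IR holds since the integral is nonnegative.

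The main subtlety is justifying that $S^*$ and the normalization stay fixed as $b_i$ varies above $r_i$, and treating the discontinuity at the reserve. For bids below $r_i$, setting allocation and payment to zero must be checked against deviations: a bidder with $v_i<r_i$ who bids above $r_i$ gets
\[
\int_{r_i}^{b_i}(x_i-x_i)\ldots
\]
To handle this, I will check directly that the utility from deviating is $(v_i-r_i)x_i(r_i)+\int_{r_i}^{b_i}(v_i-z)\,dx_i(z)\le 0$, via integration by parts.
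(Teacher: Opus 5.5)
Your proposal is correct and takes essentially the same route as the paper: monotonicity of the logistic allocation, Myerson's envelope formula integrated from $r_i$, the closed-form integral via $\frac{\tilde\lambda}{\tilde q_i}\frac{d}{dz}\ln x_i = 1 - x_i$, and IR at the reserve type fixing the constant at $\tilde u_i(r_i,\vec b_{-i})=0$, which is the revenue-maximizing choice. Your proposal also fills in details the paper leaves implicit: that $S^*$ (hence $\tilde q$ and $\tilde f$) stays fixed for $b_i \ge r_i$, and, via integration by parts, that types $v_i<r_i$ gain nothing by bidding above the reserve; your derivative $(\tilde q_i/\tilde\lambda)\,x_i(1-x_i)$ is the correct one, whereas the paper's Step~1 writes $\tilde q_i^2/\tilde\lambda$.
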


The proof is given in Appendix~\ref{A1.0}. Eq.~\eqref{eq:payment_rule} yields the revenue-maximizing payment, under which $\tilde{u}_i(\vec{b}) = 0$ when $b_i = r_i$. To guarantee strictly positive utility, the auctioneer can reduce this payment by a constant. For example, decreasing $\tilde{p}_i(\vec{b})$ by $r_i x_i(r_i,\vec{b}_{-i})/2$ ensures $\tilde{u}_i(\vec{b})\geq r_i x_i(r_i,\vec{b}_{-i})/2 > 0$, where the strict inequality follows from $x_i(r_i,\vec{b}_{-i}) > 0$ by Eq.~\eqref{allocate}.

\subsection{Multi-Allocation Auction Design}\label{sec:multi}
So far, we have considered a single-allocation setting, where at most one content source is selected per segment. We now extend the model to allow multiple ads to be selected and inserted within a single segment.
Let $A \subseteq \mathcal{A} \cup \{a_0\}$ denote a subset of the augmented retrieval set, and let $a_A$ represent the concatenated content of $\{a_i\}_{i \in A}$. The corresponding combinatorial extension of the RAG model in Eq.~\eqref{RAG_NEW} is given by
\begin{equation*}
    \mathbb{P}_{\text{RAG-Seq}}(y^{(t)} \mid x, y^{(1:t-1)}) 
    = \sum_{A\subseteq \mathcal{A}\cup\{a_0\}} 
    \mathbb{P}_{\eta}(a_A \mid x, y^{(1:t-1)})
    \mathbb{P}_{\mathrm{ref}}(y^{(t)} \mid x, y^{(1:t-1)}, a_A).
\end{equation*}
Since $a_A$ captures only set-level information, naively concatenating ads  can introduce positional biases. As a result, the click-through rate for each ad $i \in A$ must be estimated separately. Following \citet{hajiaghayi2024ad} and in line with Assumption~\ref{Assmp:1}, we assume that click-through rates are proportional to relative relevance.

\begin{assumption}\label{assp:vcg}
In segment $t$, let $E_{A,i}^{(t)}$ denote the click event for ad $i$ when it is displayed within an ad set $A$. We assume that the click-through rate satisfies
\begin{align*}
    \mathrm{ctr}_{A,i}^{(t)} := \mathbb{E}[E_{A,i}^{(t)} \mid x, y^{(1:t-1)}] = C_{\text{multi}}^{(t)} \, q_{A,i}^{(t)},
\end{align*}
for some normalization constant $C_{\text{multi}}^{(t)} > 0$, where
$
    q_{A,i}^{(t)} := \P_\phi\!\big(a_i \mid x, y^{(1:t-1)}, a_A\big)\,\P_\eta\!\big(a_A \mid x, y^{(1:t-1)}\big),
$
and $\sum_{i \in A} \P_\phi(a_i \mid \cdot) = 1$ for any set $A$.
Here, $\phi$ parameterizes the within-set prominence of item $i$, while $\eta$ captures the overall relevance of the ad set $A$.
\end{assumption}
We drop the superscripts and write $q_{A,i}$, $\mathrm{ctr}_{A,i}$, and $C_{\text{multi}}$ for simplicity. The introduction of $q_{A,i}$ suggests the need for a separate module to estimate set-level relevance; we provide one possible heuristic, following \citet{hajiaghayi2024ad}, in Appendix~\ref{sec:heuristic}.
Given bids $\vec{b}$, for any subset $A \subseteq \mathcal{A} \cup \{a_0\}$, the reported social welfare is defined as
$
    \mathrm{SW}_{\text{multi}}(\vec b, A)=\sum_{i\in A}  [\mathrm{ctr}_{A,i} b_i\cdot \mathbf{1}\{i\neq 0\} +  f(q_{A,0})\cdot \mathbf{1}\{i= 0\}].
$
Under Assumption \ref{assp:vcg}, define $\hat{f}_{\text{multi}}(\cdot) := f(\cdot) / C_{\text{multi}}$. Then the multi-allocation social welfare can be written as
\begin{equation}\label{eq:sw_multi} 
\mathrm{SW}_{\text{multi}}(\vec{b}, A) \propto \widehat{\mathrm{SW}}(\vec{b}, A) := \sum_{i \in A} [q_{A,i}\, b_i \cdot \mathbf{1}\{i \neq 0\} + \hat{f}_{\text{multi}}(q_{A,0}) \cdot \mathbf{1}\{i = 0\}]. 
\end{equation}
Since the ad set $\mathcal{A}$ can be large, evaluating social welfare over all subsets $A \subseteq \mathcal{A} \cup \{a_0\}$ is computationally intractable. Moreover, ads with low relevance or low bids may degrade both output quality and welfare. As in Section~\ref{sec:single}, we address this by imposing a reserve price and retaining only advertisers in
$S_{\text{multi}}=\{ i \in [n] : b_i \geq \hat{f}_{\text{multi}}(q_0)/q_i \}$. This screening step both reduces computational complexity and helps filter out low-quality ads.

We then define the feasible set of configurations as
$\mathcal{A}^* := \{ A : A \subseteq S^*_{\text{multi}} \}$ where $S^*_{\text{multi}} = S_{\text{multi}} \cup \{0\} $,
so that $|\mathcal{A}^*| = 2^{|S^*_{\text{multi}}|}$.
Motivated by the classical VCG mechanism \citep{vickrey1961counterspeculation}, we adapt it to the multi-allocation RAG setting based on the social welfare in Eq.~\eqref{eq:sw_multi}, yielding a quality-preserving multi-allocation auction.

\paragraph{Allocation Rule.}
The mechanism selects the welfare-maximizing set:
\begin{equation}\label{eq:vcg_alloc}
    A^* = \argmax_{A \in \mathcal{A}^*} \widehat{\mathrm{SW}}(\vec b,A).
\end{equation}

\paragraph{Payment Rule.}
For each advertiser $i \in A^*$ with $i \neq 0$, define
$A^*_{-i} := \argmax_{A \in \mathcal{A}^*,\, i \notin A} \widehat{\mathrm{SW}}(\vec b, A)$
as the optimal set excluding advertiser $i$. 
The per-click payment for advertiser $i$ is defined as
\begin{equation}\label{eq:vcg_payment}
    \tilde p_i^{\mathrm{VCG}}(\vec b) := \left(\widehat{\mathrm{SW}}(\vec b, A^*_{-i}) 
    - \Bigl[\widehat{\mathrm{SW}}(\vec b, A^*) - q_{A^*,i}\,b_i\Bigr]\right)/q_{A^*,i}.
\end{equation}
The complete algorithm for the quality-preserving multi-allocation auction is given in Algorithm~\ref{alg:vcg} in the Appendix. The per-impression payment is given by $p_i^{\mathrm{VCG}}(\vec b) := \mathrm{ctr}_{A^*,i} \cdot  \tilde p_i^{\mathrm{VCG}}(\vec b) $.
Advertisers $i \notin A^*$ are not selected and therefore pay nothing, yielding utility $u_i = 0$. For each selected advertiser $i \in A^*$ with $i \neq 0$, the utility is $u_i(\vec{b}) = \mathrm{ctr}_{A^*,i} \cdot (v_i - \tilde{p}_i^{\mathrm{VCG}}(\vec{b}))$.

\begin{theorem}\label{thm:vcg}
    Under Assumption~\ref{assp:vcg}, the multi-allocation auction defined by Eqs.~\eqref{eq:vcg_alloc}--\eqref{eq:vcg_payment}
    satisfies DSIC and IR, and maximizes social welfare under truthful reporting.
\end{theorem}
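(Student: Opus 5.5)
The plan is to reduce the statement to the classical Vickrey--Clarke--Groves argument with Clarke pivot payments. The one new feature is that the feasible family $\mathcal{A}^*$ depends on the bids through the screening set $S_{\text{multi}}$. Throughout I will work with the normalized welfare $\widehat{\mathrm{SW}}$ in Eq.~\eqref{eq:sw_multi}. Under Assumption~\ref{assp:vcg} it differs from $\mathrm{SW}_{\text{multi}}$ only by the positive factor $C_{\text{multi}}$, so maximizing one maximizes the other. This already gives the last claim of Theorem~\ref{thm:vcg}: under truthful reporting $b_i=v_i$, Eq.~\eqref{eq:vcg_alloc} selects a set maximizing true social welfare over the screened configurations.

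The first step is to rewrite a selected advertiser's utility. For $i\in A^*$, $i\neq 0$, plugging Eq.~\eqref{eq:vcg_payment} into $u_i=\mathrm{ctr}_{A^*,i}(v_i-\tilde p_i^{\mathrm{VCG}})$ and using $\mathrm{ctr}_{A^*,i}=C_{\text{multi}}q_{A^*,i}$ gives
\begin{equation*}
u_i(\vec b)=C_{\text{multi}}\Bigl[\widehat{\mathrm{SW}}_{v_i}(A^*)-\widehat{\mathrm{SW}}(\vec b,A^*_{-i})\Bigr].
\end{equation*}
Here $\widehat{\mathrm{SW}}_{v_i}(A)$ denotes the welfare of $A$ with $b_i$ replaced by $v_i$ and all other bids kept. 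Next I will observe that $A^*_{-i}$, and hence the pivot term $h_{-i}:=\widehat{\mathrm{SW}}(\vec b,A^*_{-i})$, does not depend on $b_i$. This holds because the sets excluding $i$ are the subsets of $(S_{\text{multi}}\setminus\{i\})\cup\{0\}$, and membership of $j\neq i$ in $S_{\text{multi}}$ depends only on $b_j$ and on the bid-independent weights $q_0,q_j$. The pivot is always well defined because $\{0\}$ is feasible.

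\emph{IR.} If $i\notin A^*$ then $u_i=0$. Otherwise, under truthful bidding $A^*$ maximizes $\widehat{\mathrm{SW}}_{v_i}$ over a family that contains $A^*_{-i}$. Hence $u_i\ge 0$, and equivalently $\tilde p_i^{\mathrm{VCG}}\le v_i$.

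\emph{DSIC.} I will split into two cases according to whether $v_i\ge r_i:=\hat f_{\text{multi}}(q_0)/q_i$.
\begin{itemize}
\item If $v_i\ge r_i$, truthful reporting makes $\mathcal{A}^*$ the largest family attainable by $i$; any misreport either leaves it unchanged or removes all sets containing $i$. Under truthful reporting $A^*$ maximizes $\widehat{\mathrm{SW}}_{v_i}$ over this largest family. The utility is exactly that objective minus the $b_i$-independent $h_{-i}$, and the outcome of exclusion is utility $0\le$ truthful utility. So no deviation helps.
\item If $v_i<r_i$, truthful reporting yields $u_i=0$. A deviation $b_i\ge r_i$ could place $i$ in some selected $A$, giving utility $C_{\text{multi}}[\widehat{\mathrm{SW}}_{v_i}(A)-h_{-i}]$. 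I must show this is nonpositive.
\end{itemize}

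The second case is the step I expect to be the main obstacle. My first attempt is to invoke the marginal-contribution logic of Theorem~\ref{the:reserve}: when $v_i$ is below the reserve, removing $i$ from $A$ should not decrease welfare, so $\widehat{\mathrm{SW}}_{v_i}(A)\le\widehat{\mathrm{SW}}(\vec b,A\setminus\{i\})\le h_{-i}$. The difficulty is that the reserve is stated with the single-item weights $q_0,q_i$, whereas the welfare uses the set weights $q_{A,i},q_{A,0}$. So I would need to verify, or adopt as the interpretation of Assumption~\ref{assp:vcg}, that $i$'s marginal contribution within any set $A$ is bounded by $q_{A,i}(v_i-r_i)$ in the same way as in Theorem~\ref{the:reserve}.

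If that link is not available, the fallback is to note the following. For bidders above their reserve the mechanism is exactly VCG on a fixed feasible family, so truthfulness holds verbatim there. The reserve step is a bid-threshold that is monotone in $b_i$, and combined with the VCG critical-value structure it yields a threshold payment $\max\{r_i,\cdot\}$-type rule. I would then check this rule directly in the fallback case.
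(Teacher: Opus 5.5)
Your IR argument and your case $v_i\ge r_i$ are the same Clarke-pivot argument the paper uses. For the selected set $A$ it writes $u_i=C_{\text{multi}}\bigl[\widehat{\mathrm{SW}}((v_i,\vec b_{-i}),A)-\widehat{\mathrm{SW}}(\vec b,A^*_{-i})\bigr]$, with a pivot term that does not depend on $b_i$. The paper never treats your second case. In its DSIC step it takes $\hat A=\argmax_{A\in\mathcal{A}^*}\widehat{\mathrm{SW}}((b_i',\vec b_{-i}),A)$ over the \emph{same} family $\mathcal{A}^*$ as under truthful bidding. It then concludes $\widehat{\mathrm{SW}}((v_i,\vec b_{-i}),A^*)\ge\widehat{\mathrm{SW}}((v_i,\vec b_{-i}),\hat A)$, which silently assumes the screened family does not depend on $b_i$. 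That inference is invalid exactly when $v_i<r_i\le b_i'$, the case you flagged. So you will not find the missing step in the paper.

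The gap is real, and it cannot be closed from Assumption~\ref{assp:vcg} and Eqs.~\eqref{eq:vcg_alloc}--\eqref{eq:vcg_payment}, because the statement as written fails in that case. Take the heuristic of Appendix~\ref{sec:heuristic} with $\xi=0$, which is allowed. Then $q_{A,i}=q_i$ for every $A\ni i$, so $\widehat{\mathrm{SW}}(\vec b,A)=\sum_{j\in A\setminus\{0\}}q_jb_j+\hat f_{\text{multi}}(q_0)\mathbf{1}\{0\in A\}$ is additive. Hence $A^*_{-i}=A^*\setminus\{i\}$, and Eq.~\eqref{eq:vcg_payment} gives $\tilde p_i^{\mathrm{VCG}}=0$. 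A bidder with $0<v_i<r_i$ gets $0$ by reporting truthfully. By reporting $r_i$ it is admitted, since its term $q_ir_i=\hat f_{\text{multi}}(q_0)>0$ puts it in $A^*$; it then pays nothing and earns $C_{\text{multi}}q_iv_i>0$. The single-item reserve simply does not bound set-level marginal contributions.

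Your two routes therefore fare as follows:
\begin{itemize}
\item Route 1 needs an extra hypothesis. One sufficient condition is $\widehat{\mathrm{SW}}((v_i,\vec b_{-i}),A)\le\widehat{\mathrm{SW}}(\vec b,A\setminus\{i\})$ for every feasible $A\ni i$ whenever $v_i<r_i$. Under it your chain $\widehat{\mathrm{SW}}((v_i,\vec b_{-i}),\hat A)\le\widehat{\mathrm{SW}}(\vec b,\hat A\setminus\{i\})\le\widehat{\mathrm{SW}}(\vec b,A^*_{-i})$ finishes the proof, because $\hat A\setminus\{i\}$ is a feasible set excluding $i$.
\item Route 2 changes Eq.~\eqref{eq:vcg_payment}, so it proves truthfulness of a different mechanism. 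Moreover, a per-click $\max\{r_i,\cdot\}$ price need not be truthful here, because $q_{A^*,i}$ varies with the selected set rather than being $0/1$.
\end{itemize}

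A modification that does work mirrors the normalization $\tilde u_i(r_i,\vec b_{-i})=0$ in Theorem~\ref{thm:mechanism_dsic_unique}. Define $\Phi_i(z):=\bigl(\max_{A\ni i}\widehat{\mathrm{SW}}((z,\vec b_{-i}),A)-\widehat{\mathrm{SW}}(\vec b,A^*_{-i})\bigr)^+$, where the max runs over sets containing $i$ that are feasible once $i$ is eligible. Charge the per-click price $\tilde p_i^{\mathrm{VCG}}+\Phi_i(r_i)/q_{A^*,i}$. Since $\Phi_i$ is nondecreasing, truthful utility equals $C_{\text{multi}}[\Phi_i(v_i)-\Phi_i(r_i)]\ge 0$ for $v_i\ge r_i$. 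Any report that gets $i$ selected yields at most this amount, which is $\le 0$ when $v_i<r_i$.
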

The proof is given in Appendix~\ref{A5.0}. Theorem~\ref{thm:vcg} highlights a key advantage of our VCG-based mechanism: unlike the multi-allocation segment auction of \citet{hajiaghayi2024ad}, which does not provide formal guarantees of incentive compatibility or welfare optimality, our mechanism satisfies both. In addition, their approach requires the number of ads to be fixed in advance, whereas our mechanism endogenously selects the welfare-maximizing subset, resulting in a more flexible and efficient allocation.

\section{Experiments}\label{sec:exp}
We evaluate the proposed mechanisms across multiple advertiser scenarios. At each segment, we first run the auction to select the winning content, then inject it into the LLM prompt to generate the next segment, continuing the response while naturally incorporating the selected content. In our experiments, we approximate the organic content using LLM-generated concise, factual summaries in a Wikipedia-like style, with each round conditioned on previously generated outputs. Full details on the prompts and LLM are provided in Appendix~\ref{sec:detail}.

\paragraph{Setup.} 
We consider outputs consisting of three sentences. The single-allocation mechanism treats each sentence as a separate segment and conducts one auction per sentence, while the multi-allocation auction treats all three sentences as a single segment and conducts one auction for the entire output.
We evaluate six mechanisms in total. 
The first two are variants of our quality-preserving single-allocation auction (Algorithm~\ref{alg:erlam}): \textbf{QP w/ repl.}, which allows repeated selection of the same ad across segments with replacement, and \textbf{QP w/o repl.}, which requires distinct ads at each segment without replacement. The third is \textbf{QP multi-alloc}, our quality-preserving multi-allocation auction (Algorithm~\ref{alg:vcg}).
The remaining three are baselines from \citet{hajiaghayi2024ad}: \textbf{Seg~w/ repl.} and \textbf{Seg~w/o repl.} are the segment single-allocation auction with and without replacement, and \textbf{Seg~multi-alloc} is their multi-allocation variant.

\paragraph{Scenarios.}
We evaluate our mechanisms across four advertiser scenarios, each repeated over 100 independent trials. Scenario~1 uses the query \emph{``What can I visit on a trip to Hawaii?''} from \citet{hu2025gem}; full ad details are provided in Appendix~\ref{ad:s1}. 
We set the KL regularization parameter to $\tilde{\lambda}=1$ and instantiate the organic welfare function as a power function (justified in Appendix~\ref{sec:justify}), with $\hat f(q_0)=2q_0^{0.8}$ for single allocation and $\hat{f}_{\text{multi}}(q_0)=1.5q_0^{0.8}$ for multi-allocation. 
Scenarios~2--4 follow the setup of \citet{hajiaghayi2024ad} and are described in Appendix~\ref{exp:add}. The computation of retrieval weights $q_i$ follows \citep{hajiaghayi2024ad}, with details in Appendix~\ref{para:retrieval}, while the set-level retrieval weight $q_{A}$ extends their heuristic; see Appendix~\ref{sec:heuristic} for details.

\paragraph{Evaluation Metrics.}
We define the following metrics based on realized auction outcomes.
Let $i^{(t)}$ denote the winner of the $t$-th single-allocation auction, or the $t$-th selected advertiser in the multi-allocation auction, and let $p^{(t)}$ denote the corresponding per-click payment.
All metrics are averaged over 100 trials and reported with standard error (SE).
Retrieval weights $q_i^{(t)}$ are not normalized when computing metrics to ensure a fair comparison (referred to as \emph{relevance} hereafter). For QP multi-alloc, $q^{(t)}_{i^{(t)}}$, $T$, and $\hat{f}(\cdot)$ are instantiated as $q_{A^*, i}$, $ |A^*|$, and $\hat{f}_{\text{multi}}(\cdot)$, respectively.

\begin{itemize}[leftmargin=*, itemsep=4pt, parsep=0pt, topsep=0pt]
    \item $\text{Revenue per Ad} := \frac{1}{|\mathcal{T}_{\text{ad}}|} \sum_{t \in \mathcal{T}_{\text{ad}}} p^{(t)} \cdot q^{(t)}_{i^{(t)}}$, the average pseudo-impression payment over all inserted ads, where $\mathcal{T}_{\text{ad}} \subseteq [T]$ denotes the rounds with ad insertion.
    \item $\text{Social Welfare} := \sum_{t=1}^{T} \left[ v_{i^{(t)}} {q}^{(t)}_{i^{(t)}} \cdot \mathbf{1}\{i^{(t)} \neq 0\} + \hat{f}(q_0^{(t)}) \cdot \mathbf{1}\{i^{(t)} = 0\} \right]$, summing allocative value for winning ads and organic value otherwise.
    \item $\text{Relevance} := \sum_{t=1}^{T} q^{(t)}_{i^{(t)}}$, the total retrieval relevance of allocated content.
    \item $\text{KL Divergence} := \sum_{t=1}^{T} D_{\mathrm{KL}}(\vec{x}^{(t)} \,\|\, \vec{\tilde{q}}^{(t)})$, measuring deviation from the organic baseline; smaller values indicate better semantic coherence.
\end{itemize}

\paragraph{Output Quality.} We adopt the same quality measure as \citet{hajiaghayi2024ad}, which computes the cosine similarity between the no ad responses (detailed in Appendix \ref{sec:no_ad}) and the mechanism-generated output, normalized to $[0,1]$, to assess how much the ad insertion degrades the original content quality.

\subsection{Results}\label{exp:s1}

\paragraph{Metrics Comparison.}
The scenario is shown on the left of Table~\ref{tab:scenario0} with five different ads plus an organic document. The organic document has the highest relevance score (results reported as mean $\pm$ SE over 100 trials), justifying its use as the screening threshold for the eligible ads set. Intuitively, ads such as "SunWing Airlines" and "TropicStay" with high bids and relevance are favorable for inclusion, while ads like "NovaSkin" with low relevance may degrade the output quality if included.

\begin{table}[ht!]
    \caption{Left: advertiser bids and relevance scores, including organic relevance $q_0^{(1)}$. Right: auction outcomes averaged over 100 trials (reported as mean $\pm$ SE), best performance in each metric in \textbf{bold}.}
    \vspace{2mm}
    \label{tab:scenario0}
    \begin{minipage}{.272\linewidth}
      \centering
      \resizebox{\textwidth}{!}{
        \begin{tabular}{ccc}
            \toprule
            Advertiser   & Bid  & $q_i^{(1)}$  \\ \midrule
            SunWing     & $3$  & $0.62$ \\ 
            TropicStay   & $3$  & $0.67$ \\ 
            WanderBite   & $2$  & $0.61$ \\ 
            NovaSkin     & $2$  & $0.49$ \\ 
            GridPower    & $1$  & $0.59$ \\
            Organic      & $-$  & $0.80$ {\scriptsize ($\pm .001$)} \\ \bottomrule
        \end{tabular}
        }
    \end{minipage}%
\begin{minipage}{.73\linewidth}
      \centering
        \resizebox{\textwidth}{!}{
            \begin{tabular}{lccccc}
            \toprule
            Mechanism & Revenue per Ad & Soc.\ Wel. & Relevance & KL Div. & Num.\ Ads \\ \midrule
            QP w/ repl.  & $\textbf{1.64}$ {\scriptsize ($\pm .024$)} & $5.90$ {\scriptsize ($\pm .056$)} & $2.17$ {\scriptsize ($\pm .011$)} & $0.02$ {\scriptsize ($\pm .001$)} & $2.15$ {\scriptsize ($\pm .079$)} \\ 
            QP w/o repl. & $1.63$ {\scriptsize ($\pm .024$)} & $5.42$ {\scriptsize ($\pm .016$)} & $2.12$ {\scriptsize ($\pm .010$)} & $\textbf{0.01}$ {\scriptsize ($\pm .001$)} & $1.67$ {\scriptsize ($\pm .051$)} \\ 
            QP multi-alloc     & $1.07$ {\scriptsize ($\pm .002$)} & $\textbf{5.99}$ {\scriptsize ($\pm .004$)} & $\textbf{2.27}$ {\scriptsize ($\pm .002$)} & $-$ & $2.00$ {\scriptsize ($\pm .000$)} \\ 
            Seg w/ repl.\citep{hajiaghayi2024ad}   & $1.13$ {\scriptsize ($\pm .049$)} & $5.00$ {\scriptsize ($\pm .112$)} & $1.97$ {\scriptsize ($\pm .017$)} & $0.18$ {\scriptsize ($\pm .001$)} & $3.00$ {\scriptsize ($\pm .000$)}\\ 
            Seg w/o repl.\citep{hajiaghayi2024ad} & $1.06$ {\scriptsize ($\pm .042$)} & $4.48$ {\scriptsize ($\pm .062$)} & $1.84$ {\scriptsize ($\pm .011$)} & $0.19$ {\scriptsize ($\pm .004$)} & $3.00$ {\scriptsize ($\pm .000$)}\\ 
            Seg multi-alloc\citep{hajiaghayi2024ad}   & $0.88$ {\scriptsize ($\pm .039$)} & $4.46$ {\scriptsize ($\pm .062$)} & $1.82$ {\scriptsize ($\pm .007$)} & $-$ & $3.00$ {\scriptsize ($\pm .000$)}\\ 
            \bottomrule
            \end{tabular}
        }
    \end{minipage}

\end{table}

As shown in Table~\ref{tab:scenario0}, our QP-based mechanisms consistently outperform the Seg baselines in revenue per ad, social welfare, and relevance, while achieving substantially lower KL divergence. This improvement is driven by the reserve price screening, which filters out ads with low relevance or low bids. In this scenario, two ads are clearly dominant, and the QP mechanisms consistently select these for insertion.

\paragraph{Quality Comparison.}
We measure output quality for each individual sentence for single-allocation mechanisms, as well as cumulatively for the first $k$ sentences, with $k=3$ corresponding to the full response evaluated across all mechanisms.
\begin{figure}[h]
    \centering
    \includegraphics[width=\textwidth]{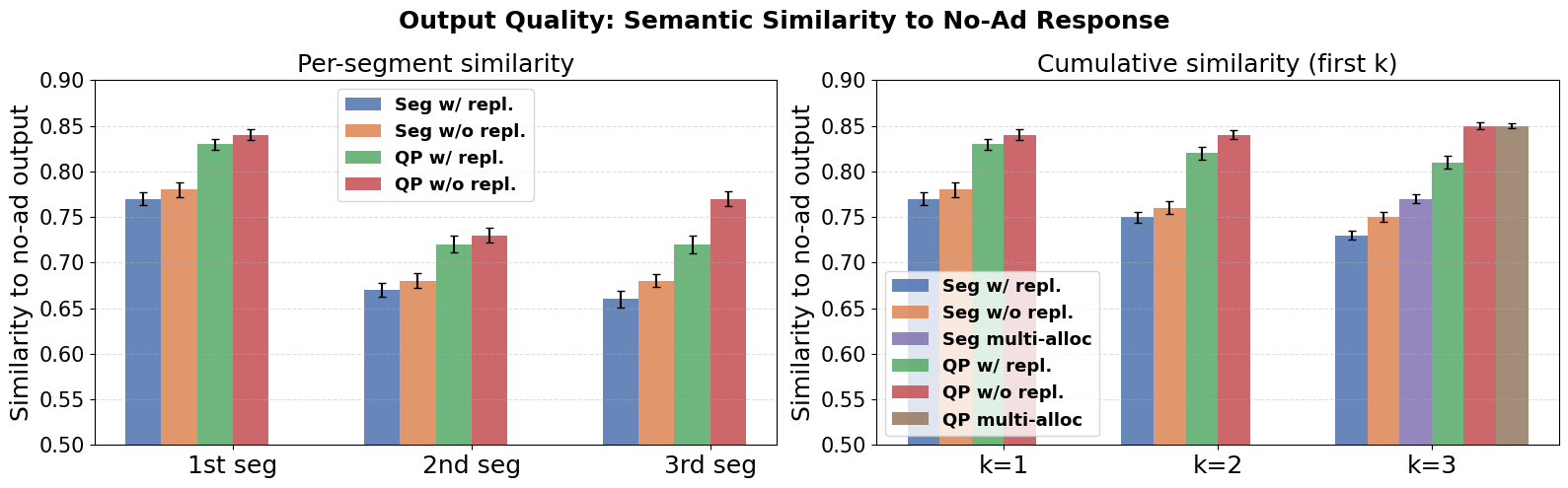}
    \caption{Output quality measured by semantic similarity to the no-ad response, averaged over 100 trials; error bars show $\pm$ SE. Left: per-segment similarity across the three generation rounds. Right: cumulative similarity over the first $k$ sentences.}
    \label{fig:output_quality_0}
\end{figure}

\begin{figure}[h]
    \begin{tcolorbox}[colback=gray!10, colframe=gray!50, width=\textwidth, sharp corners]
    \scriptsize
    \begin{tabular}{p{0.3\textwidth}p{0.3\textwidth}p{0.3\textwidth}}
        \textbf{QP single-allocation (Section \ref{sec:single})}  & \textbf{QP multi-allocation (Section \ref{sec:multi})} & \textbf{Seg single-allocation (See \citep{hajiaghayi2024ad})} \\[2pt]
        \textbf{(Seg 1)} Explore Hawaii's \teal{spectacular volcanoes, lush rainforests, and stunning beaches}, all while enjoying the comfort and local charm of oceanfront villas and jungle hideaways available exclusively through \flink{TropicStay}. 
        &
        \textbf{(Seg 1)}  Embark on your Hawaiian adventure with \flink{SunWing Airlines} and explore the unique ecosystems of Hawai'i \teal{Volcanoes National Park and Haleakala National Park}, where the beauty of untouched nature awaits.
        &
        \textbf{(Seg 1)} Explore Hawaii's stunning landscapes from the \teal{sandy shores of Waikiki to the volcanic peaks of Mauna Kea}, and let \flink{WanderBite} guide you to the islands' most authentic dining experiences, enhancing your adventure with every bite.
        \\[4pt]
        \textbf{(Seg 2)} Take off with \flink{SunWing Airlines}, your trusted partner for seamless travel across the Pacific and beyond, ensuring your journey to Hawaii is as enchanting as the islands themselves.
        &
        \textbf{(Seg 2)} After your flight, settle into a charming, locally-owned property through \flink{TropicStay}, perfectly positioned near cultural landmarks like the \teal{royal Iolani Palace or the vibrant Polynesian Cultural Center}.
        &
        \textbf{(Seg 2)} As you revel in the island's beauty and taste its flavors, protect and nourish your skin with \flink{NovaSkin}, the \red{perfect travel companion that keeps your skin safe and rejuvenated}, no matter where your Hawaiian journey takes you.
        \\[4pt]
        \textbf{(Seg 3)} Beyond the natural wonders, consider immersing yourself in the local history and culture by visiting the \teal{Iolani Palace and the Bishop Museum}, or participating in traditional Hawaiian activities like \teal{luaus and lei-making workshops}.
        &
        \textbf{(Seg 3)} This seamless combination of comfortable travel and authentic local living ensures your Hawaiian getaway is both enriching and memorable.
        &
        \textbf{(Seg 3)} Start your unforgettable Hawaiian experience with \flink{SunWing Airlines}, where comfort meets convenience, ensuring your trip across the Pacific is as pleasurable as your destination.
        \\
    \end{tabular}
    \end{tcolorbox}
    \caption{Sample outputs of different mechanisms. \textcolor{teal}{Teal}: useful answer information.  \textcolor{blue}{Blue}: ads link. \textcolor{red}{Red}: incoherent or forced ad insertion.}
    \label{fig:qual1}
\end{figure}

Figure~\ref{fig:output_quality_0} shows that our mechanisms consistently achieve higher output quality across all segments, highlighting the benefit of incorporating organic content into the generation process. QP w/o repl.\ attains the highest quality overall, especially in the third segment, where previously selected ads are excluded, increasing the likelihood of choosing organic content. Multi-allocation mechanisms also outperform single-allocation ones, consistent with \citet{hajiaghayi2024ad}, as they provide greater flexibility for organizing content across the output. Figure~\ref{fig:qual1} further illustrates these differences through representative examples.

From the qualitative outputs in Figure~\ref{fig:qual1}, QP single-allocation uses the organic document when no ad is selected, producing informative content, as seen in Segment 3. QP multi-allocation does not always guarantee informative ad-free segments (see Segment~3), but it integrates organic and sponsored content more naturally across the full response, resulting in comparable overall quality, consistent with Figure~\ref{fig:output_quality_0}.
In contrast, Seg single-allocation enforces one ad per segment, which can lead to the insertion of low-relevance ads, such as ``NovaSkin'' in Segment 2, disrupting the response flow. As a result, useful information is concentrated in the first segment, while later segments are increasingly dominated by ads. These observations highlight the advantage of our quality-preserving mechanisms. Similar patterns hold in Scenarios 2--4; see Appendix~\ref{exp:add} for details.

\section{Conclusion}\label{sec:conclusion}
We study mechanism design for LLM advertising and propose a quality-preserving auction framework that integrates organic content into the RAG pipeline and uses an endogenous reserve price to admit only welfare-improving ads. On this screened model, we develop a KL-regularized single-allocation mechanism with Myerson payments and a screened VCG multi-allocation mechanism. Both achieve dominant-strategy incentive compatibility and individual rationality. Experiments across different scenarios show consistent gains in revenue per ad, relevance, and social welfare, while preserving semantic fidelity to no-ad responses. These results demonstrate that effective monetization of LLM systems need not compromise output quality.

Several directions remain open for future work. 
First, the organic welfare function can be learned from platform objectives and user interaction data, rather than being specified a priori. This includes estimating how users trade off relevance, informativeness, and intrusiveness, and using these signals to adaptively shape the reserve price and the resulting revenue–quality trade-off across contexts. 
Second, incorporating budget constraints and repeated interactions would better capture real-world dynamics, as advertisers face spending limits and platforms operate over multiple rounds. Finally, developing richer measures of output quality beyond semantic similarity may further improve alignment between monetization and user experience, for example by incorporating factors such as factual accuracy or user engagement.

\newpage

{
\bibliography{auctionLLM}
}

\newpage
\appendix
\normalsize
\section*{Appendix}
\noindent 

\section{Choice of the Organic Social Welfare Function}\label{sec:justify}
We adopt $f(q_0) := \eta q_0^{\beta}$ with $\eta, \beta > 0$ in our experiments. A similar power functional form is adopted in \citet{dubey2024auctions} to map relative prominence to realized click-through rate, serving as a natural bridge between semantic relevance and advertiser value.
The following proposition shows that, under continuity, monotonicity, and a scale-invariant welfare ratio, the organic social welfare function $f$ is uniquely characterized.

\begin{proposition}
\label{thm:uniqueness}
Let $f: (0, 1] \to \mathbb{R}_{\geq 0}$ satisfy:
\begin{enumerate}[label=\textup{(A\arabic*)}, leftmargin=*, nosep]
    \item \textbf{Monotonicity}: $f$ is strictly increasing.
    \item \textbf{Relative Welfare Invariance}: For all $q, q' \in (0,1]$ and $\eta > 0$ with $\eta q, \eta q' \in (0,1]$, $f(\eta q)/f(\eta q') = f(q)/f(q')$.
    \item \textbf{Continuity}: $f$ is continuous on $(0,1]$.
\end{enumerate}
Then $f(q_0) = \eta q_0^{\beta}$ for some $\eta, \beta > 0$ is the unique family satisfying \textup{(A1)--(A3)}.
\end{proposition}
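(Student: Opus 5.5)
The plan is to reduce \textup{(A2)} to a multiplicative Cauchy equation on $(0,1]$ and then solve it with continuity. Note that the scaling parameter in \textup{(A2)} is unrelated to the constant $\eta$ in the conclusion. To avoid confusion I will write the scaling as $c$.

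\textbf{Step 1: positivity and normalization.} First I show that $f>0$ on $(0,1]$. For any $q\in(0,1]$, \textup{(A1)} gives $f(q)>f(q/2)\ge 0$. So the ratios in \textup{(A2)} are well defined, and I can set $g(q):=f(q)/f(1)$ for $q\in(0,1]$, which gives $g(1)=1$. Now take $q'=1$ in \textup{(A2)}, with $c\in(0,1]$ and any $q\in(0,1]$; the side conditions $cq, c\in(0,1]$ then hold automatically. This yields $f(cq)/f(c)=f(q)/f(1)$, that is,
\[
g(cq)=g(c)\,g(q)\qquad\text{for all } c,q\in(0,1].
\]

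\textbf{Step 2: solving the functional equation.} Define $h:[0,\infty)\to\mathbb{R}$ by $h(s):=\ln g(e^{-s})$. The map $s\mapsto e^{-s}$ is a bijection from $[0,\infty)$ onto $(0,1]$, and $g$ is continuous and positive by \textup{(A3)} and Step 1. Hence $h$ is continuous and satisfies Cauchy's equation $h(s+t)=h(s)+h(t)$ on the semigroup $[0,\infty)$.

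This is the step that needs some care, since Cauchy's equation is only available on a half-line rather than on all of $\mathbb{R}$. I would run the standard argument directly:
\begin{itemize}
\item induction gives $h(ms)=m\,h(s)$ for $m\in\mathbb{N}$;
\item it follows that $h(s/n)=h(s)/n$, and hence $h(rs)=r\,h(s)$ for nonnegative rationals $r$;
\item continuity then gives $h(s)=h(1)\,s$ for all $s\ge 0$.
\end{itemize}
Setting $\beta:=-h(1)$ gives $g(q)=q^{\beta}$, and therefore $f(q)=f(1)\,q^{\beta}$.

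\textbf{Step 3: sign constraints and the converse.} Put $\eta:=f(1)>0$, which is positive by Step 1. Strict monotonicity \textup{(A1)} of $q\mapsto \eta q^{\beta}$ on $(0,1]$ forces $\beta>0$; indeed $\beta=0$ gives a constant and $\beta<0$ gives a decreasing function.

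For the converse, every $f(q)=\eta q^{\beta}$ with $\eta,\beta>0$ satisfies all three conditions:
\begin{itemize}
\item it is continuous, so \textup{(A3)} holds;
\item it is strictly increasing, so \textup{(A1)} holds;
\item $f(cq)/f(cq')=(q/q')^{\beta}=f(q)/f(q')$, so \textup{(A2)} holds.
\end{itemize}
This establishes both existence and uniqueness of the family. I expect no real obstacle beyond the half-line version of Cauchy's equation in Step 2.
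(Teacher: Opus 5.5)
Your proof is correct and follows the paper's route: reduce \textup{(A2)} to the multiplicative Cauchy equation $g(cq)=g(c)\,g(q)$ for $g=f/f(1)$, solve it using continuity, and then use \textup{(A1)} to get $\eta=f(1)>0$ and $\beta>0$. You add two details the paper skips: you prove $f>0$ so the ratios make sense, and you solve Cauchy's equation on the half-line directly where the paper cites Acz\'el.
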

\begin{proof}
Rewrite \textup{(A2)} as
\[
    f(\eta q) = \frac{f(\eta q')}{f(q')} \cdot f(q)
    \quad \forall\, \eta, q, q' \in (0,1].
\]
Since the left-hand side is independent of $q'$, the ratio 
$f(\eta q') / f(q')$ must be constant in $q'$; denote it 
$g(\eta)$. We thus obtain the multiplicative Cauchy functional 
equation:
\[
    f(\eta q) = g(\eta)\cdot f(q)
    \quad \forall\, \eta, q \in (0,1].
\]
Setting $q = 1$ gives $f(\eta) = g(\eta) f(1)$, so $g(\eta) = f(\eta)/f(1)$. Defining $\hat{f}(q) := f(q)/f(1)$, 
we get $\hat{f}(\eta q) = \hat{f}(\eta)\hat{f}(q)$, 
the standard multiplicative Cauchy equation with $\hat{f}(1) = 1$. 
By \textup{(A3)}, the unique continuous solution is 
$\hat{f}(q) = q^{\beta}$ for some $\beta \in \mathbb{R}$ 
\citep{aczel2006lectures}, hence $f(q) = \eta q^{\beta}$ 
where $\eta := f(1) > 0$ by \textup{(A1)}. Finally, \textup{(A1)} requires 
$\beta > 0$, since $\eta q^{\beta}$ is strictly increasing 
on $(0,1]$ if and only if $\beta > 0$. Conversely, $f(q) = \eta q^{\beta}$ with $\eta, \beta > 0$ clearly satisfies \textup{(A1)--(A3)}. This completes the proof of Proposition~\ref{thm:uniqueness}.
\end{proof}

The parameter $\beta \in (0,1)$ further captures \emph{diminishing marginal returns} to organic content quality: as $q_0$ increases, each incremental gain in relevance yields a progressively smaller increase in social welfare. This concavity is consistent with the classical law of diminishing marginal utility~\citep{marshall1961principles} and the concave value mappings employed in related auction models~\citep{dubey2024auctions}. Intuitively, when organic content is already highly relevant, users are well-served and marginal welfare gains diminish; conversely, low-relevance content leaves substantial room for improvement. We therefore restrict attention to $\beta \in (0,1)$ throughout.

\section{A Heuristic for the Set-Level Retrieval Weight $q_A$}\label{sec:heuristic}

One possible heuristic, following \citet{hajiaghayi2024ad}, is 
\[
q_A = \sum_{i \in A} q_i + \frac{\xi}{|A|(|A|-1)} \sum_{i \neq j \in A} s(i,j) \cdot \mathrm{rel}(a_i, a_j),
\]
with per-item attribution given by $q_{A,i} = \frac{q_i}{\sum_{j \in A} q_j} q_A$. 
Here, $\mathrm{rel}(a_i, a_j)$ measures the semantic relevance between $a_i$ and $a_j$, which can be computed using the semantic similarity introduced in Appendix~\ref{para:retrieval}. Moreover, $s(i,j)=+1$ if $i=0$ or $j=0$, and $s(i,j)=-1$ otherwise, thereby rewarding ads that are complementary to the organic content while penalizing mutually redundant ones. The parameter $\xi \geq 0$ controls the strength of the pairwise term, and the second term vanishes when $|A|=1$.
The normalization factor $|A|(|A|-1)$ corresponds to the total number of ordered pairs, ensuring that the pairwise term is invariant to the set size. Since \citet{hajiaghayi2024ad} fixes $|A|$ ex ante, this factor reduces to a constant and can be absorbed into $\xi$, and is therefore not explicitly included in their formulation. In our experiments, we choose $\xi=1$, which balances the individual relevance terms and the pairwise complementarity term.

\section{Full Algorithm of the QP Multi-Allocation Auction Mechanism}
\label{sec:vcg_alg}
For completeness, Algorithm~\ref{alg:vcg} presents the full pseudocode of the quality-preserving multi-allocation auction introduced in Section~\ref{sec:multi}. The algorithm proceeds in two phases: an eligibility screening phase that filters out ads failing to meet the reserve price condition, followed by a VCG allocation and payment computation phase.

\begin{algorithm}[h]
\caption{\normalsize Quality-Preserving Multi-Allocation Auction at Segment $t$}
\begin{algorithmic}[1]

\State \textbf{Input:} Query $x$; previously generated segments $y^{(1:t-1)}$;  ad contents $\mathcal{A}=\{a_i\}_{i=1}^n$ and organic content $a_0$;  welfare function $\hat{f}_{\text{multi}}(\cdot)$; bids $\vec{b}=(b_1,\dots,b_n)$.

\Statex \textbf{// Phase 1: Eligible Set Construction}

\State Compute retrieval weights $q_i = \P_{\eta}(a_i \mid x, y^{(1:t-1)})$ 
for all $i = 0, \dots, n$, and form the eligible set
\[
S_{\text{multi}} = \{ i \in [n] : b_i \geq \hat{f}_{\text{multi}}(q_0)/q_i \},  \quad S^*_{\text{multi}} = S_{\text{multi}} \cup \{0\} ,\quad 
\mathcal{A}^* = \{ A : A \subseteq S^*_{\text{multi}} \}.
\]

\State Compute $q_{A,i}$ for all $A \in \mathcal{A}^*$ and $i \in A$
(e.g., via the heuristic in Section~\ref{sec:heuristic}).

\Statex \textbf{// Phase 2: Multi-Allocation Auction Mechanism (Section \ref{sec:multi})}

\State Compute the optimal allocation according to Eq.~\eqref{eq:vcg_alloc}:
$
A^* = \argmax_{A \in \mathcal{A}^*} \widehat{\mathrm{SW}}(\vec{b}, A).
$

\State For each $i \in A^*$ with $i \neq 0$, compute per-click payment 
$\tilde{p}_i^{\mathrm{VCG}}(\vec{b})$ according to Eq.~\eqref{eq:vcg_payment}. 
Set $\tilde{p}_i^{\mathrm{VCG}}(\vec{b}) = 0$ for all advertisers $i \notin A^*$.

\State \textbf{Output:} Winning set $A^*$; per-click payments 
$\{\tilde{p}_i^{\mathrm{VCG}}(\vec{b})\}_{i \in A^*, i\neq 0}$.

\end{algorithmic}
\label{alg:vcg}
\end{algorithm}

\section{Proofs of Main Results}\label{app:proofs}

\subsection{Proof of Theorem \ref{the:reserve}}\label{A0.0}

\begin{proof}
Since $\sum_{i=0}^n q_i = 1$, the social welfare in Eq.~\eqref{SW_rag} can be 
rewritten as
\begin{equation*}
\mathrm{SW}_{\mathrm{rag}}
= f(q_0) + \sum_{i=1}^n q_i \big(\mathrm{ctr}_i v_i - f(q_0)\big).
\end{equation*}
Since $q_i > 0$ for all $i$ and each advertiser can be independently included in or excluded from the candidate set, including advertiser $i$ increases social welfare if and only if $\mathrm{ctr}_i v_i - f(q_0) \geq 0$, and the welfare-maximizing eligible set is $S = \{i \in [n] : \mathrm{ctr}_i v_i \geq f(q_0)\}$.

Advertiser $i$ should thus be selected only if $v_i \geq f(q_0)/\mathrm{ctr}_i$.
Under Assumption~\ref{Assmp:1}, $\mathrm{ctr}_i = Cq_i$, which yields
\[
r_i := \frac{f(q_0)}{\mathrm{ctr}_i} = \frac{f(q_0)}{Cq_i} = \frac{\hat{f}(q_0)}{q_i}.
\]
Under a DSIC and IR mechanism, truthful reporting $b_i = v_i$ is a 
dominant strategy, and the above threshold is implementable via the 
reserve price $r_i$. This completes the proof of Theorem \ref{the:reserve}.
\end{proof}

\subsection{Proof of Proposition \ref{theo:2}}\label{A2.0}
\begin{proof}
For notational brevity, we omit the conditioning on $(x, y^{(1:t-1)})$ 
and simply write distributions over $y^{(t)}$. 
Let $\P_i := \P_{\mathrm{ref}}(\cdot \mid a_i)$ for $i \in S^*$.

By definitions Eq.~\eqref{RAG} and Eq.~\eqref{base_rag}, both $\P_{\mathrm{gen}}$ 
and $\P_{\mathrm{rag}}$ are mixtures over the same family $\{\P_i\}_{i\in S^*}$:
\[
    \P_{\mathrm{gen}} = \sum_{i\in S^*} x_i\, \P_i, 
    \qquad 
    \P_{\mathrm{rag}} = \sum_{i\in S^*} \tilde{q}_i\, \P_i.
\]

Let $I$ be a random index taking values in $S^*$, and define joint distributions over $S^* \times \mathcal{Y}$:
\[
    P_{IY}(i,\, y) := x_i\,\P_i(y), 
    \qquad 
    Q_{IY}(i,\, y) := \tilde{q}_i\,\P_i(y).
\]
The marginals on $\mathcal{Y}$ are exactly $\P_{\mathrm{gen}}$ and 
$\P_{\mathrm{rag}}$, respectively. 
Computing the KL divergence between the joint distributions:
\begin{align*}
    D_{\mathrm{KL}}(P_{IY} \,\|\, Q_{IY}) 
    &= \sum_{i\in S^*} \sum_{y} x_i\,\P_i(y) 
       \log \frac{x_i\,\P_i(y)}{\tilde{q}_i\,\P_i(y)} \\
    &= \sum_{i\in S^*} x_i \log \frac{x_i}{\tilde{q}_i} 
       \underbrace{\sum_{y} \P_i(y)}_{=\,1} \\
    &= D_{\mathrm{KL}}\!\left(\vec{x} \,\|\, \vec{\tilde{q}}\right).
\end{align*}
By the data processing inequality (marginalization over $I$ cannot increase KL divergence), we have that
\begin{align*}
    D_{\mathrm{KL}}(\P_{\mathrm{gen}} \,\|\, \P_{\mathrm{rag}}) 
    &= D_{\mathrm{KL}}\!\left(\sum_{i\in S^*} P_{IY}(i,\cdot) 
       \,\middle\|\, \sum_{i\in S^*} Q_{IY}(i,\cdot)\right) \\
    &\le D_{\mathrm{KL}}(P_{IY} \,\|\, Q_{IY}) \\
    &= D_{\mathrm{KL}}\!\left(\vec{x} \,\|\, \vec{\tilde{q}}\right).
\end{align*}

This completes the proof of Proposition~\ref{theo:2}.
\end{proof}

\subsection{Proof of Theorem \ref{allocation_rule}}\label{A3.0}
\begin{proof}
The feasible set $\Delta^{|S^*|}:=\{\vec{x}: x_i\ge 0,\sum_{i\in S^*} x_i=1\}$ is compact and convex. The objective is
strictly concave in $\vec{x}$: the linear terms are concave, and
$\tilde{\lambda}D_{\mathrm{KL}}(\vec{x}\|\vec{\tilde{q}})$ is strictly 
convex for $\tilde{\lambda}>0$. Hence a unique maximizer exists, and it lies in 
the relative interior of $\Delta^{|S^*|}$ since the KL term diverges on the 
boundary.

Introduce a Lagrange multiplier $\lambda$ for the constraint
$\sum_{i\in S^*}x_i = 1$ and form
\[
\mathcal{L}(\vec{x},\lambda)
= \sum_{i\in S^*} s_i\, x_i
  - \tilde{\lambda}\sum_{i\in S^*} x_i \ln\frac{x_i}{\tilde{q}_i}
  - \lambda\!\left(\sum_{i\in S^*} x_i - 1\right),
\]
where we have used the compact form $s_i x_i$ to unify the two cases 
in the objective (since $s_0 x_0(\vec{b}) = x_0(\vec{b})\cdot \tilde f(q_0)$ 
and $s_i x_i = x_i\cdot \tilde{q}_i v_i $ for $i\in S$).
Setting $\partial\mathcal{L}/\partial x_i = 0$ for each 
$i \in S^*$ gives
\[
    s_i - \tilde{\lambda}\!\left(\ln x_i - \ln\tilde{q}_i + 1\right) 
    - \lambda = 0,
\]
which rearranges to
\[
    \ln x_i = \frac{s_i}{\tilde{\lambda}} + \ln\tilde{q}_i 
    - 1 - \frac{\lambda}{\tilde{\lambda}}.
\]

Exponentiating yields
\[
    x_i = \tilde{q}_i
          \exp\!\left(\frac{s_i}{\tilde{\lambda}}\right)
          \cdot e^{-(1+\lambda/\tilde{\lambda})},
    \qquad i \in S^*.
\]
The constant $e^{-(1+\lambda/\tilde{\lambda})}$ is fixed by the normalization
constraint $\sum_{i\in S^*}x_i = 1$:
\[
    e^{-(1+\lambda/\tilde{\lambda})}
    = \frac{1}{\sum_{j=0}^{m}
        \tilde{q}_j\exp \left(\tfrac{s_j}{\tilde{\lambda}}\right)}.
\]
Substituting back gives the stated formula. Uniqueness follows from the
strict concavity of the objective.
This completes the proof of Theorem \ref{allocation_rule}.
\end{proof}

\subsection{Proof of Theorem \ref{thm:mechanism_dsic_unique}}\label{A1.0}
\begin{proof}
We prove the result using Myerson's characterization of truthful mechanisms
\citep{myerson1981optimal}. Throughout, we fix $\vec{b}_{-i}$ and view the allocation and payment as functions of advertiser $i$'s bid.

\paragraph{Step 1: Monotonicity of the Allocation.}
For bidder $i \in S$, the allocation rule Eq.~\eqref{allocate} at bid $b_i$ is
\[
x_i(b_i, \vec{b}_{-i})
= \frac{\tilde q_i \exp(\tilde q_i b_i / \tilde \lambda)}
       {\tilde q_i \exp(\tilde q_i b_i / \tilde \lambda) + R_{-i}},
\qquad
R_{-i} = \sum_{j \neq i} \tilde q_j \exp(s_j / \tilde \lambda).
\]
Differentiating with respect to $b_i$ yields
\[
\frac{\partial x_i}{\partial b_i}
= \frac{\tilde q_i^2}{\tilde \lambda}\,
x_i(\vec b) \bigl(1 - x_i(\vec b)\bigr) > 0.
\]
Hence the allocation rule is strictly increasing in $b_i$.

\paragraph{Step 2: Envelope Condition and DSIC.}
Consider truthful reporting $b_i = v_i$, under which the per-click utility is $\tilde u_i(v_i, \vec{b}_{-i}) = v_i\, x_i(v_i, \vec{b}_{-i}) - \tilde p_i(v_i, \vec{b}_{-i})$. By Myerson's lemma, any DSIC mechanism with the monotone allocation rule established in Step~1 must satisfy the envelope condition
\[
\frac{\partial \tilde u_i(v_i, \vec{b}_{-i})}{\partial v_i} = x_i(v_i, \vec{b}_{-i}).
\]
Integrating from $r_i$ to $v_i$ gives
\[
\tilde u_i(v_i, \vec{b}_{-i}) 
= \tilde u_i(r_i, \vec{b}_{-i}) 
+ \int_{r_i}^{v_i} x_i(z, \vec{b}_{-i})\, dz.
\]
Solving for $\tilde p_i$ from the definition of $\tilde u_i$ yields
\begin{equation}\label{eq:myerson_payment}
\tilde p_i(v_i, \vec{b}_{-i})
= v_i\, x_i(v_i, \vec{b}_{-i}) 
- \tilde u_i(r_i, \vec{b}_{-i})
- \int_{r_i}^{v_i} x_i(z, \vec{b}_{-i})\, dz,
\end{equation}
where $\tilde u_i(r_i, \vec{b}_{-i})$ is an additive constant in $v_i$.

\paragraph{Step 3: Closed Form.}
Evaluating the integral yields
\[
\int_{r_i}^{v_i} x_i(z, \vec{b}_{-i})\, dz
= v_i - r_i 
- \frac{\tilde \lambda}{\tilde q_i} 
\ln \frac{x_i(v_i, \vec{b}_{-i})}{x_i(r_i, \vec{b}_{-i})}.
\]
Substituting into Eq.~\eqref{eq:myerson_payment} gives
\[
\tilde p_i(v_i, \vec{b}_{-i})
= v_i \bigl(x_i(v_i, \vec{b}_{-i}) - 1\bigr) 
+ r_i 
+ \frac{\tilde \lambda}{\tilde q_i} 
\ln \frac{x_i(v_i, \vec{b}_{-i})}{x_i(r_i, \vec{b}_{-i})}
- \tilde u_i(r_i, \vec{b}_{-i}).
\]
Since this identity must hold for every $v_i \in \mathbb{R}_{\geq 0}$ and $\tilde p_i$ is defined as a function of the reported bid, renaming the argument $v_i$ as $b_i$ yields Eq.~\eqref{eq:payment_rule} up to the additive constant $\tilde u_i(r_i, \vec{b}_{-i})$.

\paragraph{Step 4: IR.}
Since $x_i(b_i, \vec{b}_{-i})$ is strictly increasing in $b_i$ (Step~1), we have $x_i(\vec b) \geq x_i(r_i, \vec{b}_{-i})$ for all $b_i \geq r_i$. Therefore
\[
\tilde u_i(\vec b) 
= \tilde u_i(r_i, \vec{b}_{-i}) 
+ \int_{r_i}^{b_i} x_i(z, \vec{b}_{-i})\, dz 
\geq \tilde u_i(r_i, \vec{b}_{-i}).
\]
The IR constraint $\tilde u_i(\vec b) \geq 0$ is satisfied if and only if $\tilde u_i(r_i, \vec{b}_{-i}) \geq 0$, i.e.\ $\tilde p_i(r_i, \vec{b}_{-i}) \leq r_i\, x_i(r_i, \vec{b}_{-i})$.
The payment rule Eq.~\eqref{eq:payment_rule} corresponds to the canonical choice $\tilde u_i(r_i, \vec{b}_{-i}) = 0$, which is the revenue-maximizing selection among all DSIC and IR mechanisms.
This completes the proof of Theorem~\ref{thm:mechanism_dsic_unique}.
\end{proof}

\subsection{Proof of Theorem \ref{thm:vcg}}\label{A5.0}

\begin{proof}

\textbf{(i) IR.} We verify $u_i(v_i, \vec{b}_{-i}) \geq 0$ for all 
$\vec{b}_{-i} \in \mathbb{R}_{\geq 0}^{n-1}$.

Under truthful reporting $b_i = v_i$, the mechanism selects $A^* = \argmax_{A \in \mathcal{A}^*} \widehat{\mathrm{SW}}((v_i, \vec{b}_{-i}), A)$. If $i \notin A^*$, then $p_i^{\mathrm{VCG}}(v_i, \vec{b}_{-i}) = 0$ and $u_i(v_i, \vec{b}_{-i}) = 0 \geq 0$ trivially. If $i \in A^*$, then by the payment rule Eq.~\eqref{eq:vcg_payment},
\begin{align*}
    u_i(v_i, \vec{b}_{-i}) 
    &= \mathrm{ctr}_{A^*,i} \cdot v_i - p_i^{\mathrm{VCG}}(v_i, \vec{b}_{-i}) \\
    &= C_{\text{multi}} \cdot \bigl(q_{A^*,i}\, v_i - \widehat{\mathrm{SW}}((v_i, \vec{b}_{-i}), A^*_{-i}) + \widehat{\mathrm{SW}}((v_i, \vec{b}_{-i}), A^*) - q_{A^*,i}\, v_i \bigr) \\
    &= C_{\text{multi}} \cdot \bigl(\widehat{\mathrm{SW}}((v_i, \vec{b}_{-i}), A^*) - \widehat{\mathrm{SW}}((v_i, \vec{b}_{-i}), A^*_{-i})\bigr) \\
    &\geq 0,
\end{align*}
where the last inequality follows since $A^*$ maximizes $\widehat{\mathrm{SW}}((v_i, \vec{b}_{-i}), \cdot)$ over all $A \in \mathcal{A}^*$, including those with $i \notin A$, so $\widehat{\mathrm{SW}}((v_i, \vec{b}_{-i}), A^*) \geq \widehat{\mathrm{SW}}((v_i, \vec{b}_{-i}), A^*_{-i})$.

\textbf{(ii) DSIC.} We verify $u_i(v_i, \vec{b}_{-i}) \geq u_i(b_i', \vec{b}_{-i})$ for every 
$b_i' \in \mathbb{R}_{\geq 0}$ and $\vec{b}_{-i} \in \mathbb{R}_{\geq 0}^{n-1}$.

Under truthful reporting, the calculation in (i) shows
\[
u_i(v_i, \vec{b}_{-i}) = C_{\text{multi}} \cdot \bigl(\widehat{\mathrm{SW}}((v_i, \vec{b}_{-i}), A^*) - \widehat{\mathrm{SW}}((v_i, \vec{b}_{-i}), A^*_{-i})\bigr),
\]
where this identity also holds when $i \notin A^*$ since in that case $A^* = A^*_{-i}$ and both sides equal zero.

Under a misreport $b_i'$, the mechanism selects $\hat{A} = \argmax_{A \in \mathcal{A}^*} \widehat{\mathrm{SW}}((b_i', \vec{b}_{-i}), A)$. If $i \in \hat{A}$,
\begin{align*}
    u_i(b_i', \vec{b}_{-i}) 
    &= \mathrm{ctr}_{\hat{A},i}\, v_i - p_i^{\mathrm{VCG}}(b_i', \vec{b}_{-i}) \\
    &= C_{\text{multi}} \cdot \bigl(\widehat{\mathrm{SW}}((b_i', \vec{b}_{-i}), \hat{A}) - \widehat{\mathrm{SW}}((b_i', \vec{b}_{-i}), A^*_{-i}) - q_{\hat{A},i}(b_i' - v_i)\bigr) \\
    &= C_{\text{multi}} \cdot \bigl(\widehat{\mathrm{SW}}((v_i, \vec{b}_{-i}), \hat{A}) - \widehat{\mathrm{SW}}((v_i, \vec{b}_{-i}), A^*_{-i})\bigr),
\end{align*}
where the last equality uses the fact that $\widehat{\mathrm{SW}}((b_i', \vec{b}_{-i}), \hat{A}) - q_{\hat{A}, i} b_i' = \widehat{\mathrm{SW}}((v_i, \vec{b}_{-i}), \hat{A}) - q_{\hat{A}, i} v_i$ for $i \in \hat{A}$, and $\widehat{\mathrm{SW}}(\cdot, A^*_{-i})$ does not depend on $i$'s bid since $i \notin A^*_{-i}$.

If $i \notin \hat{A}$, then $p_i^{\mathrm{VCG}}(b_i', \vec{b}_{-i}) = 0$ and $u_i(b_i', \vec{b}_{-i}) = 0$, which is no greater than $u_i(v_i, \vec{b}_{-i})$ by (i).

In the case $i \in \hat{A}$, since $A^*$ maximizes $\widehat{\mathrm{SW}}((v_i, \vec{b}_{-i}), \cdot)$ over $\mathcal{A}^*$, we have $\widehat{\mathrm{SW}}((v_i, \vec{b}_{-i}), A^*) \geq \widehat{\mathrm{SW}}((v_i, \vec{b}_{-i}), \hat{A})$, hence
\[
u_i(v_i, \vec{b}_{-i}) - u_i(b_i', \vec{b}_{-i}) = C_{\text{multi}} \cdot \bigl(\widehat{\mathrm{SW}}((v_i, \vec{b}_{-i}), A^*) - \widehat{\mathrm{SW}}((v_i, \vec{b}_{-i}), \hat{A})\bigr) \geq 0.
\]

Under truthful reporting $\vec{b} = \vec{v}$, the VCG mechanism selects $A^* = \argmax_{A \in \mathcal{A}^*} \widehat{\mathrm{SW}}(\vec{v}, A)$ by Eq.~\eqref{eq:vcg_alloc}, which directly maximizes social welfare. This completes the proof of Theorem \ref{thm:vcg}.
\end{proof}

\section{Implementation Details}\label{sec:detail}

Section \ref{sec:prompt} presents the prompt templates used across different auction mechanisms, and Section \ref{LLM:api} describes the GPT-4 API configuration used as the LLM backbone in all experiments.

\subsection{Prompts}\label{sec:prompt}
This section details the prompts used across different mechanisms. For the compared methods (i.e., Seg w/ repl., Seg w/o repl., Seg multi-alloc), we adopt the prompts from \citet{hajiaghayi2024ad} without modification to ensure reproducibility.

\paragraph{Initial Query.}
For Scenario 1, we use the following initial query from \citet{hu2025gem}:
\begin{lstlisting}[language=Python]
prompt = "What can I visit on a trip to Hawaii?"
\end{lstlisting}
For Scenarios 2--4, we use the following initial query following \citet{hajiaghayi2024ad}:
\begin{lstlisting}[language=Python]
prompt = "Can you suggest some books similar to \"To Kill a Mockingbird\"?"
\end{lstlisting}

\paragraph{Prompt to generate the organic content $a_0$.}
At round $k=1$, the organic content $a_0$ is generated by prompting the model to produce a concise, factual three-sentence summary in a Wikipedia-like style. For $k > 1$, the prompt is conditioned on the previous cumulative output, instructing the model to introduce new information without repeating prior content, thereby ensuring coherence across rounds.
\begin{lstlisting}[language=Python]
# Round k = 1 (no prior output)
query = f"""For the following question, write a summarized answer
        of 3 sentences that provides factual, neutral context
        relevant to the topic. Write like a Wikipedia excerpt.
        Question: {prompt}"""

# Round k > 1 (conditioned on previous output)
query = f"""For the following question, write a summarized answer
        of 3 sentences that introduces new relevant information
        not yet mentioned in the partial answer, providing factual,
        neutral context to help continue the answer.
        Do NOT repeat or comment on anything already mentioned
        in the partial answer. Write like a Wikipedia excerpt.
        Question: {prompt}
        Partial answer so far: {previous_output}"""
\end{lstlisting}

\paragraph{Quality-preserving single-allocation auction (Section \ref{sec:single}).}
At each round $k$, we select the winning content to generate a new sentence. The query is conditioned on whether the source is organic or sponsored (ad), and whether a prior output exists.
\begin{lstlisting}[language=Python]

# Round k = 1, if using organic content {organic_doc} (no ad)
query = f"""{prompt}
        Please respond to this question in only one sentence.
        You may use the following background context to inform your answer:
        >> {organic_doc}"""

# Round k = 1, if using ad content {ad_text}
query = f"""{prompt}
        Please respond to this question in only one sentence while also
        advertise {advertiser_name} with this context >> {ad_text}
        Make sure to connect the answer and the advertisement very naturally,
        not something like appending the ads after just answering the question.
        Focus on answering the question, there shouldn't be too much
        advertisement in the output.
        Make sure the response is one sentence."""

# Round k > 1, if using organic content {organic_doc} (no ad)
query = f"""You must continue your answer to my original query: {prompt}
        Your previous response was >> {previous_output}
        Please add exactly one new sentence at the end that continues naturally
        without modifying, paraphrasing, or removing any part of your
        previous response.
        You may use the following content to inform the new sentence:
        >> {organic_doc}
        Do NOT mention or reference any advertisement in the new sentence.
        Make sure there is exactly one new sentence added.
        Write the entire document, which merges your previous response
        and the new sentence."""

# Round k > 1, if using ad content {ad_text}
query = f"""You must continue your answer to my original query: {prompt}
        Your previous response was >> {previous_output}
        And you now should advertise {advertiser_name}, but without hurting
        the coherency of the entire document.
        Here's some context about {advertiser_name}:
        >> {ad_text}
        Make sure there is exactly one new sentence added.
        Write the entire document, which merges your previous response
        and the new sentence."""
\end{lstlisting}

\paragraph{Quality-preserving multi-allocation auction (Section \ref{sec:multi}).}
The VCG auction allocates a winning set $A^*$ of advertisers, and all sentences are generated in a single call. The prompt varies depending on the number of winning advertisers $|A^*|$ and whether the organic content $a_0$ is included in $A^*$.
\begin{lstlisting}[language=Python]

# No winning advertiser (organic only)
query = f"""{prompt}
        Please respond to this question in exactly 3 sentences.
        You may use the following background context to inform your answer:
        >> {organic_doc}
        Do NOT mention or reference any advertisement.
        Make sure the response is exactly 3 sentences."""

# Exactly one winning advertiser, a_0 not in A*
query = f"""{prompt}
        Please respond to this question in exactly 3 sentences
        while also advertising {advertiser_name} with this context >>
        {ad_text}
        Make sure to connect the answer and the advertisement very naturally,
        not something like appending the ads after just answering the question.
        Focus on answering the question, there shouldn't be too much
        advertisement in the output.
        Mention {advertiser_name} exactly only once.
        Make sure the response is exactly 3 sentences."""

# Exactly one winning advertiser, organic content a_0 in A*
query = f"""{prompt}
        Please respond to this question in exactly 3 sentences
        while also advertising {advertiser_name} with this context >>
        {ad_text}
        You may also use the following background context:
        >> {organic_doc}
        Make sure to connect the answer and the advertisement very naturally,
        not something like appending the ads after just answering the question.
        Focus on answering the question, there shouldn't be too much
        advertisement in the output.
        Mention {advertiser_name} exactly only once.
        Make sure the response is exactly 3 sentences."""

# Multiple winning advertisers, organic content a_0 not in A*, suppose |A*|=m
query = f"""{prompt}
        Please respond to this question in exactly 3 sentences
        while
        (1) advertise {advertisers[0]} with this context >>
        {ads[0]}
        (2) advertise {advertisers[1]} with this context >>
        {ads[1]}
        ...
        (m) advertise {advertisers[m-1]} with this context >>
        {ads[m-1]}
        Make sure to connect the answer and the advertisement very naturally,
        not something like appending the ads after just answering the question.
        Focus on answering the question, there shouldn't be too much
        advertisement in the output.
        Make sure to mention all brands ({all_ads_names}), each brand mentioned
        exactly only once, and ensure that the response is exactly
        3 sentences."""

# Multiple winning advertisers, a_0 in A*, suppose |A*|=m+1
query = f"""{prompt}
        Please respond to this question in exactly 3 sentences
        while
        (1) advertise {advertisers[0]} with this context >>
        {ads[0]}
        (2) advertise {advertisers[1]} with this context >>
        {ads[1]}
        ...
        (m) advertise {advertisers[m-1]} with this context >>
        {ads[m-1]}
        You may also use the following background context:
        >> {organic_doc}
        Make sure to connect the answer and the advertisement very naturally,
        not something like appending the ads after just answering the question.
        Focus on answering the question, there shouldn't be too much
        advertisement in the output.
        Make sure to mention all brands ({all_ads_names}), each brand mentioned
        exactly only once, and ensure that the response is exactly
        3 sentences."""
\end{lstlisting}

\subsection{LLM API Configuration}\label{LLM:api}
To ensure a fair comparison with \citet{hajiaghayi2024ad}, we adopt the same GPT-4 API configuration as in their work. Each API call consists of a single user message containing the query. 
\label{app:prompt-conf}
\begin{lstlisting}[language=Python]
messages = [{"role": "user", "content": query}]
response = client.chat.completions.create(
        model = "gpt-4-turbo",
        temperature = 1,
        max_tokens = 300,  # 500 for organic content generation
        messages = messages,)
\end{lstlisting}

\subsection{Retrieval Weight}\label{para:retrieval}
Following \citet{hajiaghayi2024ad}, we compute retrieval weights $q_i^{(t)}$ 
using the \texttt{sentence-transformers} library 
(\url{https://huggingface.co/sentence-transformers/multi-qa-MiniLM-L6-cos-v1}), 
which encodes text into dense vector representations such that semantically similar texts are mapped to nearby vectors,
as measured by cosine similarity.

\section{Ad Document Content}\label{sec:ad doc}

\subsection{Synthetic Advertisers for Scenario 1}\label{ad:s1}

The following synthetic advertisers and their corresponding ad copies were generated by an LLM, used as the ad set in Scenario 1 (Section~\ref{exp:s1}). Each entry corresponds to a fictional company and begins with a one-sentence description, followed by the document content used in the prompt during output generation.

\begin{enumerate}
    \item \flink{SunWing Airlines}: A Pacific-focused airline offering direct routes to tropical destinations with premium in-flight service and flexible booking options.
        \begin{tcolorbox}[colback=gray!10, colframe=gray!50, width=\textwidth, sharp corners]
        Take off with SunWing Airlines, your trusted partner for seamless travel across the Pacific and beyond. With spacious cabins, award-winning in-flight service, and direct routes to the world's most breathtaking destinations, SunWing makes every journey as memorable as the destination itself. Book today and enjoy exclusive deals, flexible cancellation, and a loyalty rewards program that turns miles into memories. Your next adventure starts here — fly with SunWing.
        \end{tcolorbox}
    \item \flink{TropicStay}: A vacation rental platform connecting travelers with handpicked, locally-owned properties in the world's most beautiful destinations.
        \begin{tcolorbox}[colback=gray!10, colframe=gray!50, width=\textwidth, sharp corners]
        Discover the perfect retreat with TropicStay, the premier platform for booking vacation rentals in the world's most beautiful destinations. From oceanfront villas to cozy jungle hideaways, TropicStay connects travelers with handpicked, locally-owned properties that offer authentic experiences you won't find anywhere else. With instant booking, 24/7 guest support, and verified reviews from real travelers, planning your dream getaway has never been easier. Explore, book, and belong — only with TropicStay.
        \end{tcolorbox}
    \item \flink{WanderBite}: A food discovery app that connects travelers with the best local restaurants and hidden dining gems around the globe.
        \begin{tcolorbox}[colback=gray!10, colframe=gray!50, width=\textwidth, sharp corners]
        Unlock the culinary soul of every destination with WanderBite, the app that connects food lovers to the best local restaurants, street food spots, and hidden dining gems around the globe. Whether you're craving fresh poke bowls on the North Shore or a Michelin-starred tasting menu in the city, WanderBite guides you there with curated recommendations, real-time reservations, and exclusive foodie deals. Because the best travel memories are made at the table. Download WanderBite and taste the world.
        \end{tcolorbox}
    \item \flink{NovaSkin}: A dermatologist-approved skincare brand offering reef-safe, SPF-focused products designed for active, outdoor lifestyles.
        \begin{tcolorbox}[colback=gray!10, colframe=gray!50, width=\textwidth, sharp corners]
        Introducing NovaSkin, the dermatologist-approved skincare line engineered for the modern lifestyle. From SPF-50 daily moisturizers to overnight repair serums, NovaSkin's lightweight, reef-safe formulas protect and restore your skin whether you're under the office lights or the open sun. Trusted by over 10 million customers worldwide, NovaSkin combines cutting-edge biotechnology with clean, sustainable ingredients to deliver visible results you can feel confident about. Because great skin doesn't take a vacation — and neither does NovaSkin.
        \end{tcolorbox}
    \item \flink{GridPower Bank}: A portable charging brand producing ultra-slim, waterproof power banks built for travelers and outdoor adventurers.
        \begin{tcolorbox}[colback=gray!10, colframe=gray!50, width=\textwidth, sharp corners]
        Stay charged through every adventure with GridPower Bank, the ultra-slim, high-capacity portable charger built for life on the move. Featuring rapid-charge technology, dual USB-C ports, and a rugged waterproof design, GridPower Bank keeps your devices powered through long flights, beach days, and everything in between. Compact enough to fit in your pocket, powerful enough to charge your laptop twice over — GridPower Bank is the travel essential you didn't know you needed. Power up and go.
        \end{tcolorbox}
\end{enumerate}

\subsection{Ads Used in Scenarios 2--4}\label{ad:s2-4}
To facilitate direct comparison with \citet{hajiaghayi2024ad}, we adopt their exact ad setup for Scenarios 2--4. Each entry follows the same format as in Appendix~\ref{ad:s1}, beginning with a one-sentence description followed by the full document content. For completeness, we reproduce the one-sentence description of each ad document below; full descriptions can be found in \citet{hajiaghayi2024ad}. Scenarios 2 and 3 (Appendices~\ref{app:sce2} and~\ref{sec:exp2}) use the first four ads, while Scenario 4 (Appendix~\ref{sec:exp3}) uses all eleven ads.

\begin{enumerate}
    \item \flink{Velora}: A tech company that designs and sells premium, seamlessly integrated smart devices and services for a sophisticated and efficient lifestyle.
    \item \flink{BookHaven}: An online bookstore offering a vast selection of books across all genres with a seamless shopping experience and reliable delivery.
    \item \flink{MassMart}: A membership-based retail store offering premium bulk products at unbeatable prices with a focus on customer satisfaction and community support.
    
    \item \flink{EspressoEdge}: A premium coffee shop offering high-quality, handcrafted beverages made from the finest Arabica beans, providing a luxurious coffee experience for all.
    \item \flink{SocialHub}: A leading social media platform that connects over two billion users through personalized news feeds, interactive groups, and tools for sharing life's moments and promoting businesses.
    
    \item \flink{ColaBubbles}: The world’s favorite soft drink, known for its unique flavor blend and effervescent bubbles that have been delighting people for over a century.
    \item \flink{FizzyPop}: An iconic soft drink celebrated for its crisp, refreshing flavor and vibrant effervescence, perfect for those who live life boldly and seek excitement in every moment.
    \item \flink{SkyTech}: The world’s leading aerospace company, designing and manufacturing advanced commercial airplanes, defense systems, and space technologies to ensure safe and efficient global connectivity and exploration.
    \item \flink{AeroDynamics}: The global leader in aerospace innovation, designing and manufacturing advanced commercial aircraft that provide unparalleled comfort, efficiency, and reliability for a superior flying experience.
    \item \flink{MusicStream}: The ultimate destination for streaming millions of songs with personalized recommendations and offline listening capabilities, offering a seamless music experience anytime, anywhere.
    \item \flink{BrainChips}: The global leader in semiconductor technology, providing cutting-edge processors that power a wide range of devices with industry-leading performance, reliability, and security for professionals, gamers, and more.
\end{enumerate}

\section{No-Ad Response}\label{sec:no_ad}
To measure output quality, we use semantic similarity between the ad-augmented output and the no-ad response as our baseline signal. For each trial, the similarity score is computed by comparing its output against each of the ten no-ad responses individually and then averaging the results. 
\subsection{No-Ad Responses for Scenario 1}
We list the ten no-ad baseline responses used in Scenario 1 (Figure~\ref{fig:output_quality_0}), where the user query is:\indent
    \textit{``What can I visit on a trip to Hawaii?''}
Each response comprises three sentences and contains no allocated advertisements.

\begin{tcolorbox}[colback=gray!10, colframe=gray!50, width=\textwidth, sharp corners]
\textbf{Response 1.} On a trip to Hawaii, you can visit the stunning beaches of Waikiki on Oahu, famous for their golden sands and surfing opportunities. Explore the lush landscapes and dramatic waterfalls of the Road to Hana in Maui. Additionally, don't miss the chance to experience the awe-inspiring volcanoes at Hawaii Volcanoes National Park on the Big Island, where you can witness active lava flows and rich geological history.
\end{tcolorbox}

\begin{tcolorbox}[colback=gray!10, colframe=gray!50, width=\textwidth, sharp corners]
\textbf{Response 2.} On a trip to Hawaii, you can explore the stunning landscapes of Hawaii Volcanoes National Park, home to the active Kilauea and Mauna Loa volcanoes. Enjoy the beautiful beaches, such as Waikiki Beach on Oahu or Hapuna Beach on the Big Island, perfect for sunbathing, swimming, and surfing. Additionally, you can experience traditional Hawaiian culture at a luau, where you can enjoy local food, music, and hula dancing.
\end{tcolorbox}

\begin{tcolorbox}[colback=gray!10, colframe=gray!50, width=\textwidth, sharp corners]
\textbf{Response 3.} Hawaii offers a rich array of attractions, including the stunning beaches of Waikiki on Oahu and the fascinating Hawaii Volcanoes National Park on the Big Island, where you can observe active lava flows and volcanic landscapes. Nature enthusiasts can explore the lush landscapes and dramatic cliffs of the Napali Coast on Kauai via hiking trails or boat tours. Additionally, the historical significance of Pearl Harbor and the USS Arizona Memorial on Oahu offers a poignant look at an important WWII site.
\end{tcolorbox}

\begin{tcolorbox}[colback=gray!10, colframe=gray!50, width=\textwidth, sharp corners]
\textbf{Response 4.} On a trip to Hawaii, you can explore the stunning landscapes of Volcanoes National Park on the Big Island, where you can witness active lava flows and rare geological formations. Discover the rich history and culture of the islands by visiting the Iolani Palace and the Bishop Museum in Honolulu, Oahu. Additionally, enjoy the breathtaking views and hiking opportunities along the Na Pali Coast on Kauai or relax on the pristine beaches of Maui like Kaanapali Beach or Wailea Beach.
\end{tcolorbox}

\begin{tcolorbox}[colback=gray!10, colframe=gray!50, width=\textwidth, sharp corners]
\textbf{Response 5.} On a trip to Hawaii, you can explore the stunning landscapes of Volcanoes National Park on the Big Island, where you can witness active lava flows and learn about the region's geothermal phenomena. Dive into Hawaii's rich culture at the Polynesian Cultural Center on Oahu, offering insights into the traditions, dance, and history of the various Pacific Islander communities. Don't miss the chance to relax on the world-famous beaches of Maui or engage in water activities like snorkeling and surfing in the crystal-clear ocean waters.
\end{tcolorbox}

\begin{tcolorbox}[colback=gray!10, colframe=gray!50, width=\textwidth, sharp corners]
\textbf{Response 6.} On a trip to Hawaii, you can explore the stunning beaches of Waikiki on Oahu, known for their golden sands and excellent surfing conditions. You might also visit the Hawaii Volcanoes National Park on the Big Island, where you can witness the awe-inspiring sight of active lava flows and volcanic landscapes. Additionally, the historic Pearl Harbor site, also located on Oahu, offers a profound look at an important WWII site, including the USS Arizona Memorial.
\end{tcolorbox}

\begin{tcolorbox}[colback=gray!10, colframe=gray!50, width=\textwidth, sharp corners]
\textbf{Response 7.} On a trip to Hawaii, you can explore the stunning landscapes of Hawaii Volcanoes National Park, home to the active Kilauea volcano. Enjoy the beautiful beaches of Waikiki on Oahu and take surfing lessons in the clear blue waters. Additionally, experience the rich Hawaiian culture and history at the Polynesian Cultural Center or by visiting the historic Iolani Palace in Honolulu.
\end{tcolorbox}

\begin{tcolorbox}[colback=gray!10, colframe=gray!50, width=\textwidth, sharp corners]
\textbf{Response 8.} On a trip to Hawaii, you can explore the stunning landscapes of Volcanoes National Park on the island of Hawaii, where you can witness active lava flows and learn about the region's geology. Be sure to visit the beautiful beaches of Maui, such as Wailea Beach and Kaanapali Beach, for sunbathing, snorkeling, and water sports. Additionally, take time to immerse yourself in Hawaiian culture and history at the Polynesian Cultural Center on Oahu or by attending a traditional luau with Hawaiian music, dance, and food.
\end{tcolorbox}

\begin{tcolorbox}[colback=gray!10, colframe=gray!50, width=\textwidth, sharp corners]
\textbf{Response 9.} In Hawaii, you can explore the stunning landscapes of Volcanoes National Park on the Big Island, where Kilauea and Mauna Loa volcanoes are located. Enjoy the beautiful beaches and water activities like snorkeling and surfing on islands like Maui and Oahu. Don't miss the opportunity to immerse yourself in Hawaiian culture and history at cultural sites like the Polynesian Cultural Center or the historic Iolani Palace in Honolulu.
\end{tcolorbox}

\begin{tcolorbox}[colback=gray!10, colframe=gray!50, width=\textwidth, sharp corners]
\textbf{Response 10.} In Hawaii, you can explore the stunning landscapes of Volcanoes National Park on the Big Island, where Kilauea and Mauna Loa, two of the world's most active volcanoes, reside. On Oahu, visit the historic Pearl Harbor site, including the USS Arizona Memorial, which pays homage to those lost in the 1941 attacks. Don't miss the chance to experience the breathtaking beauty of the Na Pali Coast in Kauai either by boat tour, helicopter, or hiking the Kalalau Trail.
\end{tcolorbox}

\subsection{No-Ad Responses for Scenarios 2--4}\label{noad:2-4}
We list the ten no-ad baseline responses used in Scenarios 2--4 (Figures~\ref{fig:output_quality}, \ref{fig:output_quality_2}, and \ref{fig:output_quality_3}), where the user query is
 \textit{``Can you suggest some books similar to `To Kill a Mockingbird'?''}
Each response comprises three sentences and contains no allocated advertisements.

\begin{tcolorbox}[colback=gray!10, colframe=gray!50, width=\textwidth, sharp corners]
\textbf{Response 1.} Certainly! If you enjoyed ``To Kill a Mockingbird'' for its explorations of social injustices and moral growth, you might enjoy ``The Help'' by Kathryn Stockett, which delves into the lives of African American maids working in white households in Mississippi during the 1960s. Another powerful book is ``A Time to Kill'' by John Grisham, which tackles themes of racial tension and justice in a small Southern town. Additionally, ``Roll of Thunder, Hear My Cry'' by Mildred D.\ Taylor, set during the Great Depression, offers a poignant narrative about a Black family navigating racism and economic hardship.
\end{tcolorbox}

\begin{tcolorbox}[colback=gray!10, colframe=gray!50, width=\textwidth, sharp corners]
\textbf{Response 2.} Certainly! For books similar to ``To Kill a Mockingbird,'' you might enjoy ``Go Set a Watchman'' by Harper Lee, which delves further into the complexities of morality and social justice in the South. Another recommendation is ``The Secret Life of Bees'' by Sue Monk Kidd, which tackles themes of race, love, and redemption in the 1960s South. Lastly, ``Roll of Thunder, Hear My Cry'' by Mildred D.\ Taylor explores racism and injustice through the eyes of a young black girl in the Great Depression-era South.
\end{tcolorbox}

\begin{tcolorbox}[colback=gray!10, colframe=gray!50, width=\textwidth, sharp corners]
\textbf{Response 3.} Certainly! If you enjoyed ``To Kill a Mockingbird'' by Harper Lee, you might like ``The Help'' by Kathryn Stockett, which also explores themes of race and morality through vividly drawn characters and a gripping narrative. Another recommendation is ``Go Set a Watchman,'' also by Harper Lee, which continues the story of the Finch family and discusses evolving societal norms in the South during the 1950s. Lastly, ``A Lesson Before Dying'' by Ernest J.\ Gaines provides a poignant look at justice and dignity in a racially divided community, much like Lee's iconic novel.
\end{tcolorbox}

\begin{tcolorbox}[colback=gray!10, colframe=gray!50, width=\textwidth, sharp corners]
\textbf{Response 4.} Certainly! If you enjoyed ``To Kill a Mockingbird,'' you might like ``The Help'' by Kathryn Stockett, which deals with themes of racial inequality and social justice in the 1960s South. Another recommendation is ``A Time to Kill'' by John Grisham, which also explores racial tensions and legal battles in a Southern town. Lastly, ``Go Set a Watchman'' by Harper Lee, the controversial follow-up to ``To Kill a Mockingbird,'' provides a different perspective on some of the same characters and themes.
\end{tcolorbox}

\begin{tcolorbox}[colback=gray!10, colframe=gray!50, width=\textwidth, sharp corners]
\textbf{Response 5.} Certainly, if you enjoyed ``To Kill a Mockingbird'' for its themes of social justice and moral growth, you might like ``The Help'' by Kathryn Stockett, which explores racism and civil rights in 1960s Mississippi through the perspectives of both black maids and the white women they work for. Another poignant novel is ``A Time to Kill'' by John Grisham, which also tackles issues of racism and justice in the American South. Additionally, Harper Lee's ``Go Set a Watchman,'' although controversial, offers more insight into the characters and setting of her famed novel and presents an older Scout grappling with her perceptions of her father and her upbringing.
\end{tcolorbox}

\begin{tcolorbox}[colback=gray!10, colframe=gray!50, width=\textwidth, sharp corners]
\textbf{Response 6.} Certainly! If you enjoyed ``To Kill a Mockingbird'' by Harper Lee, you might also like ``The Help'' by Kathryn Stockett, which explores themes of racism and social justice in the American South during the 1960s through the eyes of both black maids and the white women they work for. Another great choice is ``A Time to Kill'' by John Grisham, which also tackles issues of racial tension and justice in the South but with a gripping legal thriller twist. Lastly, ``Go Set a Watchman,'' also by Harper Lee, is a controversial but intriguing follow-up to her first novel, featuring many of the same characters dealing with evolving societal norms in the 1950s.
\end{tcolorbox}

\begin{tcolorbox}[colback=gray!10, colframe=gray!50, width=\textwidth, sharp corners]
\textbf{Response 7.} Certainly! If you enjoyed ``To Kill a Mockingbird,'' you might like ``The Help'' by Kathryn Stockett, which similarly tackles themes of racial injustice and morality but in 1960s Mississippi. Another book to consider is ``Go Set a Watchman'' by Harper Lee, which is actually a sequel to ``To Kill a Mockingbird'' and explores some of the same characters and themes in a later timeframe. Additionally, ``Roll of Thunder, Hear My Cry'' by Mildred D.\ Taylor, set in the Depression-era South, also delves into themes of racism and social justice through the perspective of a young African American girl.
\end{tcolorbox}

\begin{tcolorbox}[colback=gray!10, colframe=gray!50, width=\textwidth, sharp corners]
\textbf{Response 8.} Certainly! If you enjoyed ``To Kill a Mockingbird'' by Harper Lee, you might like ``The Secret Life of Bees'' by Sue Monk Kidd, which deals with themes of race, loss, and the search for identity in a similar coming-of-age format. Another great option is ``Go Set a Watchman,'' also by Harper Lee, which is a complex and somewhat controversial follow-up involving some of the same characters from her iconic debut. Additionally, ``A Time to Kill'' by John Grisham addresses themes of racial tension and justice in the South, similar to Lee's portrayal in her novel.
\end{tcolorbox}

\begin{tcolorbox}[colback=gray!10, colframe=gray!50, width=\textwidth, sharp corners]
\textbf{Response 9.} If you appreciated ``To Kill a Mockingbird'' for its powerful exploration of social justice and morality, you might enjoy ``The Help'' by Kathryn Stockett, which delves into the complexities of race relations in the 1960s South through the eyes of both black maids and the white women they work for. Another book to consider is ``A Time to Kill'' by John Grisham, which revolves around a tense legal battle after a black man takes justice into his own hands following an atrocious crime against his daughter in a racially charged Southern town. For a broader historical perspective, you might like ``Roll of Thunder, Hear My Cry'' by Mildred D.\ Taylor, which provides the viewpoint of an African American girl growing up in Depression-era Mississippi, grappling with racism and social injustice.
\end{tcolorbox}

\begin{tcolorbox}[colback=gray!10, colframe=gray!50, width=\textwidth, sharp corners]
\textbf{Response 10.} Certainly! If you enjoyed ``To Kill a Mockingbird,'' you might like ``The Help'' by Kathryn Stockett, which explores themes of racial tension and justice in the South through the lives of African American maids working for white families. Another recommendation is ``A Time to Kill'' by John Grisham, which also tackles issues of racial injustice and moral dilemmas in a small Southern town. Lastly, ``Go Set a Watchman'' by Harper Lee, the controversial follow-up to ``To Kill a Mockingbird,'' provides a different perspective on some of the same characters and themes.
\end{tcolorbox}

\section{Additional Experimental Results}\label{exp:add}

In this section, we present three additional experimental scenarios beyond Scenario~1 in Section~\ref{sec:exp}. Due to the absence of established standard benchmarks, all experiments in Scenarios~2--4 use the setup of \citet{hajiaghayi2024ad} to ensure fair comparison. Each scenario uses the query \emph{``Can you suggest some books similar to `To Kill a Mockingbird'?''} with advertiser information detailed in Appendix~\ref{ad:s2-4}; the corresponding no-ad baseline responses are provided in Appendix~\ref{noad:2-4}.
The metrics and output quality are computed in the same manner as Scenario 1.

\subsection{Scenario 2}\label{app:sce2}

\paragraph{Metrics Comparison.} We consider a scenario with four advertisers whose bids and initial retrieval weights $q_i^{(1)}$ are listed in the left side of Table~\ref{tab:scenario1}.
We use the same hyperparameter settings as Scenario~1: $\tilde{\lambda}=1$, $\hat{f}(q_0)=2q_0^{0.8}$ for single-allocation, and $\hat{f}_{\text{multi}}(q_0)=1.5q_0^{0.8}$ for multi-allocation.

\begin{table}[h]
    \caption{Metrics comparison results across mechanisms for Scenario 2. Left: advertiser bids and relevance scores, including organic relevance $q_0^{(1)}$. Right: auction outcomes averaged over 100 trials (reported as mean $\pm$ SE), best performance in each metric in \textbf{bold}.}
    \vspace{2mm}
    \begin{minipage}{.31\linewidth}
      \centering
      \resizebox{\textwidth}{!}{
        \begin{tabular}{ccc}
            \toprule
            Advertiser   & Bid  & $q_i^{(1)}$  \\ \midrule
            Velora       & $3$  & $0.53$ \\ 
            BookHaven    & $3$  & $0.70$ \\ 
            MassMart     & $2$  & $0.51$ \\ 
            EspressoEdge & $2$  & $0.49$ \\ 
            Organic      & $-$  & $0.78$ {\scriptsize ($\pm .003$)} \\ \bottomrule
        \end{tabular}
        }
    \end{minipage}%
    \begin{minipage}{.695\linewidth}
      \centering
        \resizebox{\textwidth}{!}{
            \begin{tabular}{lccccc}
            \toprule
            Mechanism & Revenue per Ad & Soc.\ Wel. & Relevance & KL Div. & Num.\ Ads \\ \midrule
            QP w/ repl.  & $\textbf{1.58}$ {\scriptsize ($\pm .028$)} & $5.80$ {\scriptsize ($\pm .057$)} & $2.15$ {\scriptsize ($\pm .012$)} & $0.04$ {\scriptsize ($\pm .001$)} & $2.03$ {\scriptsize ($\pm .085$)} \\ 
            QP w/o repl. & $1.53$ {\scriptsize ($\pm .042$)} & $5.23$ {\scriptsize ($\pm .015$)} & $2.11$ {\scriptsize ($\pm .014$)} & $\textbf{0.02}$ {\scriptsize ($\pm .001$)} & $1.36$ {\scriptsize ($\pm .061$)} \\ 
            QP multi-alloc     & $1.01$ {\scriptsize ($\pm .001$)} & $\textbf{5.82}$ {\scriptsize ($\pm .004$)} & $\textbf{2.21}$ {\scriptsize ($\pm .002$)} & $-$ & $2.00$ {\scriptsize ($\pm .000$)} \\ 
            Seg w/ repl.\citep{hajiaghayi2024ad}   & $1.15$ {\scriptsize ($\pm .045$)} & $5.01$ {\scriptsize ($\pm .087$)} & $1.86$ {\scriptsize ($\pm .017$)} & $0.06$ {\scriptsize ($\pm .000$)} & $3.00$ {\scriptsize ($\pm .000$)} \\ 
            Seg w/o repl.\citep{hajiaghayi2024ad} & $1.07$ {\scriptsize ($\pm .041$)} & $4.62$ {\scriptsize ($\pm .037$)} & $1.75$ {\scriptsize ($\pm .007$)} & $0.05$ {\scriptsize ($\pm .001$)} & $3.00$ {\scriptsize ($\pm .000$)} \\ 
            Seg multi-alloc\citep{hajiaghayi2024ad}   & $0.82$ {\scriptsize ($\pm .034$)} & $4.47$ {\scriptsize ($\pm .040$)} & $1.70$ {\scriptsize ($\pm .007$)} & $-$ & $3.00$ {\scriptsize ($\pm .000$)} \\ 
            \bottomrule
            \end{tabular}
        }
    \end{minipage}

    \label{tab:scenario1}
\end{table}

The organic relevance $q_0^{(1)}$ is the highest among all documents, justifying its use as the screening threshold for the eligible ads set. Notably, ``BookHaven'' and ``Velora'' exhibit relatively high relevance and bids, whereas ``MassMart'' and ``EspressoEdge'' have low relevance, suggesting that including them may degrade output quality. The right side of Table \ref{tab:scenario1} shows consistent results with Table \ref{tab:scenario0}, further verifying the efficiency of our QP-based methods.

\paragraph{Quality Comparison.}
We measure output quality in Figure \ref{fig:output_quality}.

\begin{figure}[ht!]
    \centering
    \includegraphics[width=\textwidth]{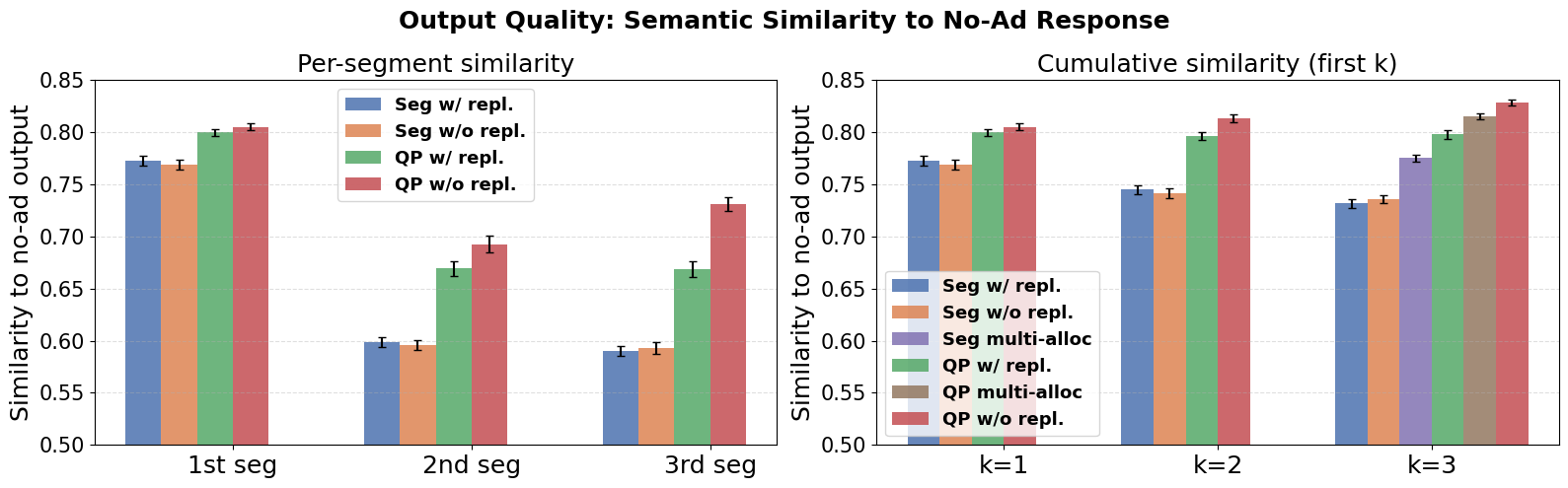}
    \caption{Output quality measured by semantic similarity to the no-ad response for Scenario 2. Left: per-segment similarity across the three generation rounds. Right: cumulative similarity over the first $k$ sentences.}
    \label{fig:output_quality}
\end{figure}

Figure \ref{fig:output_quality} shows similar results to Figure \ref{fig:output_quality_0}. For the full response case when $k=3$, QP w/o repl. performs even better than QP multi-alloc. This is because, as shown on the left side of Table \ref{tab:scenario1}, ``BookHaven'' is the only advertiser with both outstanding relevance and the highest bid. Since QP w/o repl. selects this advertiser first and is unlikely to select additional low-relevance ads, the final output typically contains only one high-relevance ad, yielding higher output quality than mechanisms that insert multiple ads. 
This indicates the need for a reserve price when most ads have low relevance.

\begin{figure}[h]
    \begin{tcolorbox}[colback=gray!10, colframe=gray!50, width=\textwidth, sharp corners]
    \scriptsize
    \begin{tabular}{p{0.3\textwidth}p{0.3\textwidth}p{0.3\textwidth}}
        \textbf{QP single-allocation (Section \ref{sec:single})}  & \textbf{QP multi-allocation (Section \ref{sec:multi})} & \textbf{Seg single-allocation (See \citep{hajiaghayi2024ad})} \\[2pt]
        \textbf{(Seg 1)} Dive into the worlds of \teal{"The Help" by Kathryn Stockett} or \teal{"I Know Why the Caged Bird Sings" by Maya Angelou}, available at \flink{BookHaven}, where you can easily find these engaging novels that, like "To Kill a Mockingbird," explore themes of race, inequality, and societal challenge, ensuring your next reading adventure is both profound and captivating.
        &
        \textbf{(Seg 1)} If you appreciated the profound themes of "To Kill a Mockingbird," you might find \teal{"The Help" by Kathryn Stockett} and \teal{"Roll of Thunder, Hear My Cry" by Mildred D. Taylor} equally compelling, both available at \flink{BookHaven} along with a vast array of literature that dives deep into social and moral complexities.
        &
        \textbf{(Seg 1)} If you appreciated the thought-provoking themes of "To Kill a Mockingbird," you might enjoy reading \teal{"The Help" by Kathryn Stockett}, a book which—much like \flink{Velora}'s seamless integration of technology into everyday life—explores complex social issues with both grace and depth.
        \\[4pt]
        \textbf{(Seg 2)} While immersing yourself in these powerful narratives, consider enhancing your reading experience with \flink{Velora}'s latest range of tablets, offering crisp displays and long battery life for uninterrupted literary indulgence.
        &
        \textbf{(Seg 2)} While exploring these narratives, consider enhancing your reading experience with \flink{Velora}'s range of elegantly designed tablets, which offer crystal-clear displays and long-lasting batteries, perfect for reading anytime, anywhere. 
        &
        \textbf{(Seg 2)} For a hassle-free way to discover more such impactful books, check out \flink{BookHaven}, your ultimate online bookstore that promises to connect every book with its perfect reader through an extensive collection, personalized recommendations, and unbeatable prices.
        \\[4pt]
        \textbf{(Seg 3)} For additional reading that captures similar themes, \teal{"The Color Purple" by Alice Walker} and \teal{"Native Son" by Richard Wright} offer further explorations into the enduring impacts of racial injustices and individual resilience.
        &
        \textbf{(Seg 3)} These books not only offer engaging stories but also mirror the profound societal observations seen in Harper Lee's work, making them perfect candidates for your next reading adventure.
        &
        \textbf{(Seg 3)} And when you need a break from reading, explore \flink{MassMart} for a \red{premium shopping experience} where you can find everything from the latest electronics to \red{bulk groceries, all at value prices, enhancing your lifestyle while meeting all your needs under one roof}.
        \\
    \end{tabular}
    \end{tcolorbox}
    \caption{Sample outputs of different mechanisms for Scenario 2. \textcolor{teal}{Teal}: useful answer information.  \textcolor{blue}{Blue}: ads link. \textcolor{red}{Red}: incoherent or forced ad insertion.}
    \label{fig:qual2}
\end{figure}

Figure \ref{fig:qual2} shows the qualitative outputs of Scenario 2, from which we can observe consistent conclusions with Scenario 1.
Specifically, QP single-allocation leverages organic content to provide useful information when no ad is selected (see Seg 3), while QP multi-allocation integrates organic and sponsored content more naturally across the full output. In contrast, Seg single-allocation is forced to insert one ad per sentence, which risks including low-relevance ads and confines useful information to the first segment that directly follows the query. This further validates the advantage of our quality-preserving mechanisms.

\subsection{Scenario 3}\label{sec:exp2}

\paragraph{Metrics Comparison.}
Scenario 3 uses the same set of advertisers as Scenario 2 but with different bids, as shown on the left side of Table \ref{tab:scenario2}. In particular, ``BookHaven'' has the highest relevance but the lowest bid, while ``MassMart'' and ``EspressoEdge'', which have low relevance, now have the highest bids. 
We use the same hyperparameter settings as Scenario 1, except that $\hat{f}(q_0)=1.5q_0^{0.8}$ for the single-allocation mechanisms.

\begin{table}[h]
    \caption{Metrics comparison results across mechanisms for Scenario 3. Left: advertiser bids and relevance scores, including organic relevance $q_0^{(1)}$. Right: auction outcomes averaged over 100 trials (reported as mean $\pm$ SE), best performance in each metric in \textbf{bold}.}
    \vspace{2mm}
    \begin{minipage}{.31\linewidth}
      \centering
      \resizebox{\textwidth}{!}{
        \begin{tabular}{ccc}
            \toprule
            Advertiser   & Bid  & $q_i^{(1)}$  \\ \midrule
            Velora       & $2$  & $0.53$ \\ 
            BookHaven    & $1$  & $0.70$ \\ 
            MassMart     & $3$  & $0.51$ \\ 
            EspressoEdge & $3$  & $0.49$ \\ 
            Organic      & $-$  & $0.79$ {\scriptsize ($\pm .002$)} \\ \bottomrule
        \end{tabular}
        }
    \end{minipage}%
    \begin{minipage}{.695\linewidth}
      \centering
        \resizebox{\textwidth}{!}{
            \begin{tabular}{lccccc}
            \toprule
            Mechanism & Revenue per Ad & Soc.\ Wel. & Relevance & KL Div. & Num.\ Ads \\ \midrule
            QP w/ repl.  & $\textbf{1.19}$ {\scriptsize ($\pm .021$)} & $4.57$ {\scriptsize ($\pm .064$)} & $1.86$ {\scriptsize ($\pm .018$)} & $0.03$ {\scriptsize ($\pm .001$)} & $2.12$ {\scriptsize ($\pm .082$)} \\ 
            QP w/o repl. & $1.17$ {\scriptsize ($\pm .024$)} & $4.06$ {\scriptsize ($\pm .019$)} & $1.82$ {\scriptsize ($\pm .020$)} & $\textbf{0.02}$ {\scriptsize ($\pm .001$)} & $1.76$ {\scriptsize ($\pm .075$)} \\ 
            QP multi-alloc     & $0.85$ {\scriptsize ($\pm .001$)} & $\textbf{4.94}$ {\scriptsize ($\pm .003$)} & $\textbf{1.91}$ {\scriptsize ($\pm .002$)} & $-$ & $2.00$ {\scriptsize ($\pm .000$)} \\ 
            Seg w/ repl.\citep{hajiaghayi2024ad}   & $1.05$ {\scriptsize ($\pm .045$)} & $4.36$ {\scriptsize ($\pm .081$)} & $1.78$ {\scriptsize ($\pm .013$)} & $0.26$ {\scriptsize ($\pm .001$)} & $3.00$ {\scriptsize ($\pm .000$)} \\ 
            Seg w/o repl.\citep{hajiaghayi2024ad} & $0.97$ {\scriptsize ($\pm .040$)} & $3.83$ {\scriptsize ($\pm .035$)} & $1.68$ {\scriptsize ($\pm .010$)} & $0.27$ {\scriptsize ($\pm .008$)} & $3.00$ {\scriptsize ($\pm .000$)} \\ 
            Seg multi-alloc\citep{hajiaghayi2024ad}   & $0.78$ {\scriptsize ($\pm .038$)} & $3.75$ {\scriptsize ($\pm .033$)} & $1.62$ {\scriptsize ($\pm .010$)} & $-$ & $3.00$ {\scriptsize ($\pm .000$)} \\ 
            \bottomrule
            \end{tabular}
        }
    \end{minipage}
    \label{tab:scenario2}
\end{table}

The right side of Table \ref{tab:scenario2} shows consistent results with previous scenarios, indicating that even when high-relevance ads are filtered out due to low bids, the introduction of organic content serves as a complement to guarantee higher relevance and social welfare, while the reserve price guarantees a revenue lower bound, demonstrating the robustness of our method.

\paragraph{Quality Comparison.} We measure output quality in Figure \ref{fig:output_quality_2}.

\begin{figure}[h]
    \centering
    \includegraphics[width=\textwidth]{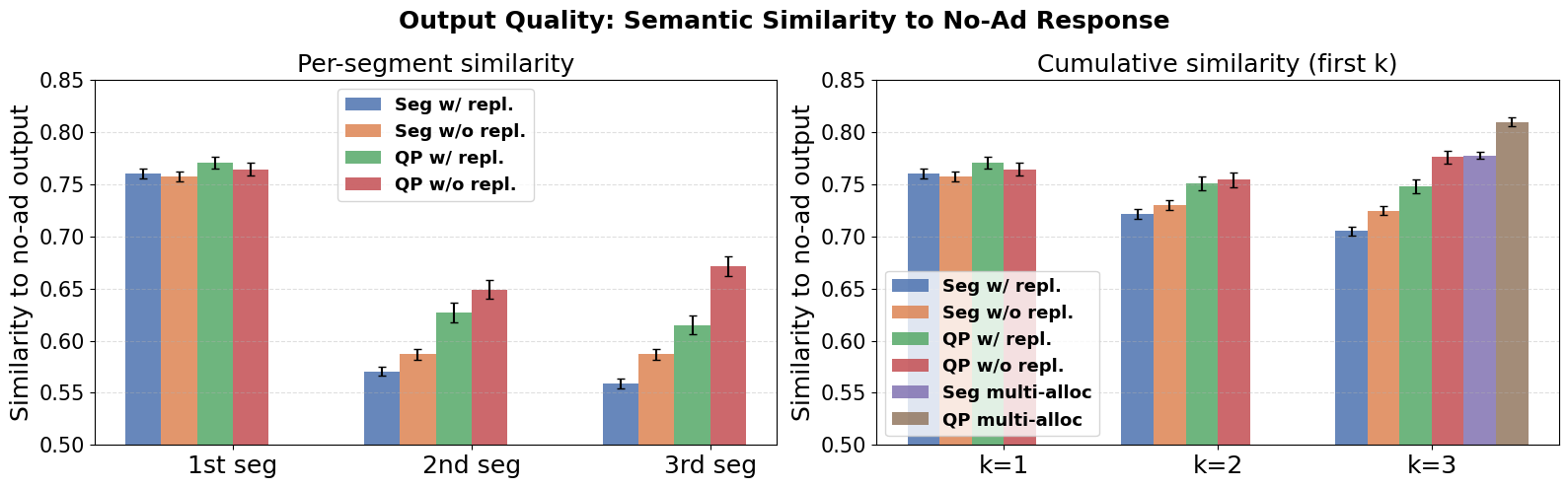}
    \caption{Output quality measured by semantic similarity to the no-ad response for Scenario 3. Left: per-segment similarity across the three generation rounds. Right: cumulative similarity over the first $k$ sentences.}
    \label{fig:output_quality_2}
\end{figure}

As shown in Figure \ref{fig:output_quality_2}, relevance is similar across mechanisms in the first sentence. This is because low-relevance ads are now likely to be selected due to their high bids, and since the first sentence directly follows the query, it tends to directly answer the query and include useful information regardless of the mechanism. However, for the following sentences, our QP methods show a substantial increase in relevance due to the use of organic content. This further verifies that when ads have low relevance, incorporating organic content is crucial for improving output quality.
For the full response case when $k=3$, the multi-allocation mechanism generally outperforms single-allocation. This is because most selected ads have low relevance, and multi-allocation can better combine them into a more coherent output. We further illustrate this in Figure \ref{fig:qual3}.

\begin{figure}[h]
    \begin{tcolorbox}[colback=gray!10, colframe=gray!50, width=\textwidth, sharp corners]
    \scriptsize
    \begin{tabular}{p{0.3\textwidth}p{0.3\textwidth}p{0.3\textwidth}}
        \textbf{QP single-allocation (Section \ref{sec:single})}  & \textbf{QP multi-allocation (Section \ref{sec:multi})} & \textbf{Seg multi-allocation (See \citep{hajiaghayi2024ad})} \\[2pt]
        \textbf{(Seg 1)} As you immerse yourself in books like \teal{"The Help" by Kathryn Stockett} or \teal{"Roll of Thunder, Hear My Cry" by Mildred D. Taylor}, which echo themes of justice and moral growth found in "To Kill a Mockingbird," consider enhancing your reading experience with a soothing cup from \flink{EspressoEdge}, where every beverage brings a touch of luxury to your literary journey. 
        &
        \textbf{(Seg 1)} If you enjoyed the deep exploration of social issues in "To Kill a Mockingbird," you might find \teal{"The Help" by Kathryn Stockett} equally compelling, as it delves into the complexities of race and class in 1960s Mississippi. 
        &
        \textbf{(Seg 1)} For those who appreciated the impactful narrative and moral challenges in "To Kill a Mockingbird," you might enjoy reading \teal{"The Help" by Kathryn Stockett}, which offers nuanced dialogues about race and class. 
        \\[4pt]
        \textbf{(Seg 2)} These recommendations, alongside explorations like \teal{"A Time to Kill" by John Grisham}, \teal{"The Secret Life of Bees" by Sue Monk Kidd}, and \teal{"Go Set a Watchman" by Harper Lee}, can further enrich your understanding of civil rights and moral challenges through powerful storytelling.
        &
        \textbf{(Seg 2)} After picking up your copy from \flink{MassMart}, where deals on literature allow you to save big while supporting community values, you could discuss the book’s themes with friends over a rich espresso at \flink{EspressoEdge}. 
        &
        \textbf{(Seg 2)} While exploring these profound themes, why not set the perfect ambiance with a creamy latte from \flink{EspressoEdge}, enhancing your reading experience with a touch of luxury?
        \\[4pt]
        \textbf{(Seg 3)} Whether shopping for books or \red{home essentials, consider visiting \flink{MassMart}}, where every purchase promises quality and value in the heart of your community.
        &
        \textbf{(Seg 3)} This setting, much like the flavorful layers of their premium Arabica blends, provides the perfect backdrop for unraveling the rich narratives and moral questions these books present.
        &
        \textbf{(Seg 3)} If you're looking for a place to purchase these books, remember that \flink{MassMart} offers an array of quality books at great prices, and while there, explore the latest tech from \flink{Velora} to enjoy an effortlessly smooth digital reading experience on a sleek new tablet.
        \\
    \end{tabular}
    \end{tcolorbox}
    \caption{Sample outputs of different mechanisms for Scenario 3. \textcolor{teal}{Teal}: useful answer information.  \textcolor{blue}{Blue}: ads link. \textcolor{red}{Red}: incoherent or forced ad insertion.}
    \label{fig:qual3}
\end{figure}

Figure \ref{fig:qual3} shows that in this scenario, although QP single-allocation benefits from organic content to include more useful information (see Seg 2), the low-relevance ads in Seg 3 appear as incoherent, forced insertions. In contrast, multi-allocation integrates low-relevance ads more naturally into the surrounding context and can place two ads in a single sentence (see Seg 2 for QP multi-allocation and Seg 3 for Seg multi-allocation).

\subsection{Scenario 4}\label{sec:exp3}

\paragraph{Metrics Comparison.}
Scenario 4 uses a larger number of ads with uniform bids, as shown in Table \ref{tab:alloc_11} with their corresponding initial relevance $q_i^{(1)}$. The organic document still has the highest relevance among all 11 ads, justifying its use as the screening threshold for the eligible ads set. We use the same hyperparameter settings as previous scenarios, except that $\hat{f}(q_0)=0.7q_0^{0.8}$ for single-allocation and $\hat{f}_{\text{multi}}(q_0)=0.9\hat{f}(q_0)$ for multi-allocation, since the bids are lower in this scenario.

\vspace{-0.3cm}

\begin{table}[h]
    \caption{Advertiser bids and initial relevance scores $q_i^{(1)}$ for Scenario 4. Organic $q_0^{(1)}$ reported as mean $\pm$ SE over 100 trials.}
    \vspace{2mm}
    \label{tab:alloc_11}
    \centering\small
    \begin{tabular*}{0.9\linewidth}{@{\extracolsep{\fill}}lcccccc}
        \toprule
        Adv & Velora & BookHaven & MassMart & EspressoEdge & SocialHub & ColaBubbles \\
        \midrule
        Bids & $1$ & $1$ & $1$ & $1$ & $1$ & $1$ \\
        $q_i^{(1)}$ & $0.53$ & $0.70$ & $0.51$ & $0.49$ & $0.48$ & $0.52$ \\
        \bottomrule
    \end{tabular*}
    \vspace{2mm}
    \begin{tabular*}{0.9\linewidth}{@{\extracolsep{\fill}}lccccc||c}
        \toprule
        Adv & FizzyPop & SkyTech & AeroDynamics & MusicStream & BrainChips & Organic \\
        \midrule
        Bids & $1$ & $1$ & $1$ & $1$ & $1$ & $-$ \\
        $q_i^{(1)}$ & $0.53$ & $0.50$ & $0.52$ & $0.52$ & $0.52$ & $0.78$ {\scriptsize ($\pm .003$)} \\
        \bottomrule
    \end{tabular*}
\end{table}
\vspace{-0.7cm}
 
\begin{table}[h]
    \caption{Auction outcomes for Scenario 4  over 100 trials (reported as mean ± SE). The best performance in each metric is highlighted in \textbf{bold}.}
    \vspace{2mm}
    \centering
    \resizebox{\textwidth}{!}{%
    \begin{tabular}{lccccc}
        \toprule
        Mechanism & Revenue per Ad & Soc.\ Wel. & Relevance & KL Div. & Num.\ Ads \\ \midrule
        QP w/ repl.  & $\textbf{0.5164}$ {\scriptsize ($\pm .0153$)} & $1.8750$ {\scriptsize ($\pm .0135$)} & ${2.0948}$ {\scriptsize ($\pm .0156$)} & ${0.0018}$ {\scriptsize ($\pm .0001$)} & $1.91$ {\scriptsize ($\pm .0950$)} \\ 
        QP w/o repl. & $0.5040$ {\scriptsize ($\pm .0169$)} & $1.8075$ {\scriptsize ($\pm .0057$)} & $2.0650$ {\scriptsize ($\pm .0189$)} & $0.0011$ {\scriptsize ($\pm .0000$)} & $1.74$ {\scriptsize ($\pm .1006$)} \\ 
        QP multi-alloc     & $0.2317$ {\scriptsize ($\pm .0440$)} & $\textbf{3.6580}$ {\scriptsize ($\pm .0634$)} & $\textbf{2.6886}$ {\scriptsize ($\pm .0738$)} & $-$ & $2.79$ {\scriptsize ($\pm .2183$)} \\ 
        Seg w/ repl.\citep{hajiaghayi2024ad}   & $0.4747$ {\scriptsize ($\pm .0184$)} & $1.7207$ {\scriptsize ($\pm .0127$)} & $1.7207$ {\scriptsize ($\pm .0127$)} & $\textbf{0.0000}$ {\scriptsize ($\pm .0000$)} & $3.00$ {\scriptsize ($\pm .0000$)}\\ 
        Seg w/o repl.\citep{hajiaghayi2024ad} & $0.4574$ {\scriptsize ($\pm .0174$)} & $1.6886$ {\scriptsize ($\pm .0088$)} & $1.6886$ {\scriptsize ($\pm .0088$)} & $\textbf{0.0000}$ {\scriptsize ($\pm .0000$)} & $3.00$ {\scriptsize ($\pm .0000$)} \\ 
        Seg multi-alloc\citep{hajiaghayi2024ad}   & $0.4741$ {\scriptsize ($\pm .0152$)} & $1.5920$ {\scriptsize ($\pm .0092$)} & $1.5920$ {\scriptsize ($\pm .0092$)} & $-$ & $3.00$ {\scriptsize ($\pm .0000$)}\\ 
        \bottomrule
    \end{tabular}}
    \label{tab:scenario3}
\end{table}
\vspace{-0.1cm}

Table \ref{tab:scenario3} shows conclusions similar to previous scenarios. The KL divergence of Seg single-allocation is zero because the conditional distribution at segment $t$ satisfies
$
\P_{\text{Seg-single}}(y^{(t)}\mid x, y^{(1:t-1)})=\sum_{i=1}^n \hat{q}_i^{(t)}\,\P_{\text{ref}}(y^{(t)}\mid x, y^{(1:t-1)}, a_i),
$
where $\hat{q}_i^{(t)} \propto b_i^{(t)} q_i^{(t)}$. When bids are uniform, $\hat{q}_i^{(t)} = q_i^{(t)}$, and thus the induced distribution equals the organic one. Our QP methods also maintain negligible KL divergences through the KL penalization in the social welfare objective. While QP multi-alloc yields lower revenue per ad, as the VCG mechanism prioritizes social welfare rather than revenue, its output quality is much higher than the other mechanisms, similar to Scenario 3, as shown in the following.

\paragraph{Quality Comparison.} We measure output quality in Figure \ref{fig:output_quality_3}.

\begin{figure}[h]
    \centering
    \includegraphics[width=\textwidth]{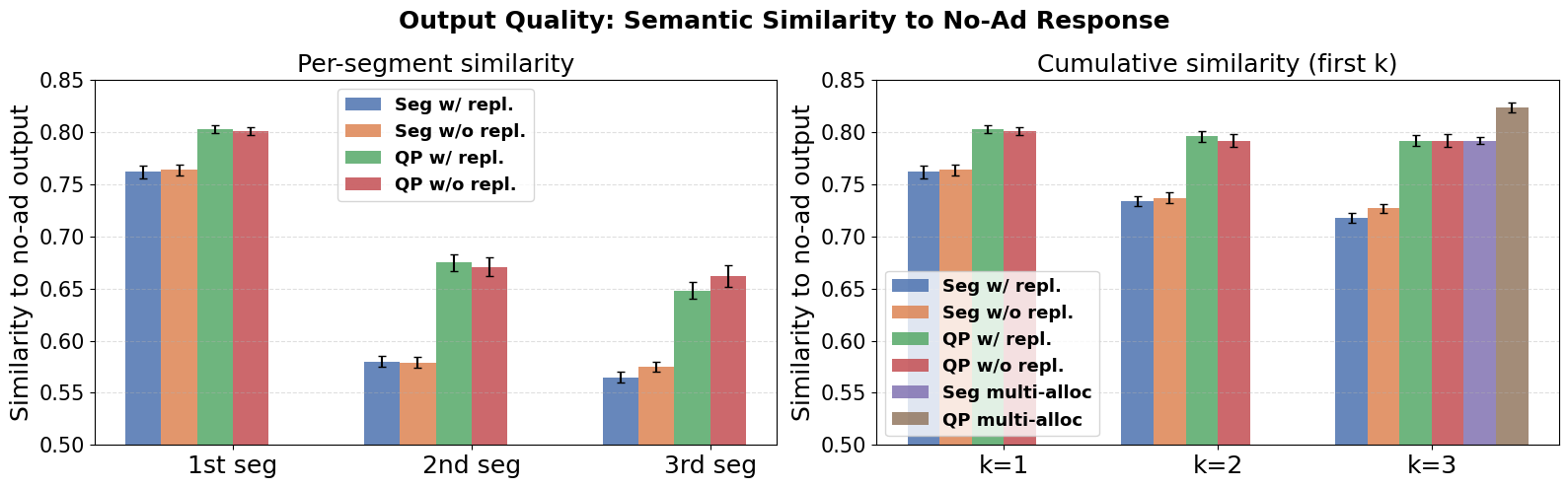}
    \caption{Output quality measured by semantic similarity to the no-ad response for Scenario 4. Left: per-segment similarity across the three generation rounds. Right: cumulative similarity over the first $k$ sentences.}
    \label{fig:output_quality_3}
\end{figure}
\vspace{-0.3cm}
Figure \ref{fig:output_quality_3} shows a similar pattern to previous scenarios. Sampling with and without replacement yields nearly identical results here, since repeated ads are rarely drawn from a large candidate pool.
\end{document}